\documentclass[12pt]{article}

\setlength{\textwidth}{6.5in}
\setlength{\oddsidemargin}{.1in}
\setlength{\evensidemargin}{.1in}
\setlength{\topmargin}{-.1in}
\setlength{\textheight}{8.4in}

\usepackage[usenames]{xcolor}
\usepackage{graphicx}
\usepackage[colorlinks=true,
linkcolor=webgreen,
filecolor=webbrown,
citecolor=webgreen]{hyperref}
\definecolor{webgreen}{rgb}{0,.5,0}
\definecolor{webbrown}{rgb}{.6,0,0}
\usepackage{color}

\usepackage{amsthm}
\usepackage{mathtools}

\usepackage{tikz}
\usepackage{subcaption}
\usepackage{verbatim}
\usepackage{diagbox}
\usepackage{float}

\theoremstyle{plain}
\newtheorem{theorem}{Theorem}
\newtheorem{corollary}[theorem]{Corollary}
\newtheorem{lemma}[theorem]{Lemma}
\newtheorem{proposition}[theorem]{Proposition}

\theoremstyle{definition}

\theoremstyle{remark}
\newtheorem{remark}[theorem]{Remark}

\setcounter{secnumdepth}{4}

\newcommand{\floor}[1]{\left\lfloor#1\right\rfloor}
\newcommand{\ceil}[1]{\left\lceil#1\right\rceil}
\newcommand{\seqnum}[1]{\href{https://oeis.org/#1}{\underline{#1}}}

\begin{document}

\begin{center}
\vskip 1cm{\LARGE\bf 
Computationally Inequivalent Summations and Their Parenthetic Forms}
\vskip 1cm
\large
Laura Monroe and Vanessa Job \\
Ultrascale Systems Research Center\\
Los Alamos National Laboratory\\
Los Alamos, NM 87501\\
USA \\
\href{mailto:lmonroe@lanl.gov}{\tt lmonroe@lanl.gov} \\
\href{mailto:vjob@lanl.gov}{\tt vjob@lanl.gov}
\end{center}

\vskip .2 in

\begin{abstract}
	Floating-point addition on a finite-precision machine is not associative, so not all mathematically equivalent summations are computationally equivalent. 
	Making this assumption can lead to numerical error in computations. 
	Proper ordering and parenthesizing is a low-overhead way of mitigating such  error in a floating point summation. 
	
	Ordered and parenthesized summations fall into equivalence classes. 
	We describe these classes, and the parenthetic forms summations in these classes take. 
	We provide summation-related interpretations for sequences known in other contexts, and give new recursive and closed formulas for sequences not previously related to summation. 
	
	We also introduce a data structure that facilitates understanding of these objects, and use it to consider certain forms of summation used by default in widely used computer languages. 
	Finally, we relate this data structure to other mathematical constructs from the fields of mathematical analysis and algorithmic analysis.
\end{abstract}

\section{Introduction}
Two summations are mathematically equivalent if they are comprised of the same summands, under any parenthesization and any ordering, since addition is associative and commutative. 
This is not true for floating-point addition on computers of finite precision \cite{goldberg91,higham93}. 

The assumption that mathematically equivalent summations are also computationally equivalent can lead to calculation error, and affect the accuracy of the result to the detriment of the overall calculation, since error in one step can propagate and influence the entire run. To complicate matters, there can easily be a very large number of non-equivalent summations, so it is not feasible to simply examine all possible cases. 

In this paper, we look at the structure of computationally equivalent forms of floating-point summations on $n$ term. We enumerate the non-equivalent groupings and orderings, and discuss in detail some cases that are of interest because of their high degree of accuracy or because they are default in commonly used programming languages. 

These  structures and this kind  of  analysis appear  in a  range of  disparate fields. For example, the form and  ordering of trees with similar structure is important in mathematical phylogenetics \cite{rosenberg19}. There is also  a  connection between the summation-based structures discussed here, the Karatsuba recursion tree \cite{karatsuba62, baruchel19, baruchel19_2}, and the Tagaki  function \cite{takagi01, lagarias11}.

\section{Background}

\subsection{Mathematically and computationally equivalent summations}
Two summations are \emph{computationally  equivalent} if they differ only by some series of pairwise transpositions. Two computationally equivalent summations will always have the same value on any system following IEEE 754 \cite{IEEE754} because the commutativity of addition is guaranteed under this standard. However, associativity is not guaranteed and should not be  assumed on a finite-precision machine.  

The equivalence relation in use throughout this paper is accordingly that imposed by IEEE 754: pairwise commutativity. All isomorphisms we discuss in this paper are in terms of this relation. When we call two summations equivalent in this paper, we mean computationally equivalent on a IEEE-754-compliant finite-precision machine. We do not mean mathematically equivalent, which of course does assume  associativity. 

Adding the same summands with different groupings and orderings that are computationally inequivalent can produce different results in practice, and can induce rounding error \cite{goldberg91,higham93}. This is discussed in detail in many references going back decades. 

As an example of computationally inequivalent summations, consider a summation  problem with one very large number and many very small numbers. Two groupings can lead to different results: 
\begin{itemize}
\item When adding a large floating-point number to a small one, the sum may exceed the precision of the computer, and the significant digits of the small number are lost. 
\item On the other hand, if many very small floats are added together first before adding the large number, the resulting sum of small elements may add to the end sum, even though some of the significant digits of the sum of small elements are still lost. 
\end{itemize}

\subsection{Grouping and ordering}
In this paper, we use the terms \emph{parenthesization} and  \emph{grouping} interchangeably.

Two  summations that  are computationally equivalent must have equivalent parenthetic form. However, having equivalent parenthetic form is not sufficient for computational equivalence. 

This reason for this is the importance of ordering. For any parenthetic form, one can always select a particular set of floating-point summands and impose an ordering in such a way that floating-point error takes place. On the other hand, optimal grouping combined with the best ordering for that parenthesization can lead to much-reduced rounding error \cite{job20}. For this reason, when we enumerate computationally inequivalent summations, we must consider both grouping and ordering. 

A simple example of two computationally  inequivalent summations that have  equivalent parenthetic form is $((a+b)+c)$
and $(a+(b+c))$. The sum $(a+(b+c))$ is computationally equivalent to $((b+c)+a)$, by pairwise commutativity, and $((b+c)+a)$ has the same parenthetic form as $((a+b)+c)$. However, the ordering of additions means that that $((b+c)+a)$ may not obtain the same results as $((a+b) + c$, and so they are not computationally equivalent.

\subsubsection{Implications for computation}
In some sense,  selecting a grouping is answering the question: ``In what manner should the summation take place?'', whereas selecting an ordering  answers: ``How shall we instantiate the chosen grouping?" 

Parenthesization is relevant to such things as parallelization, where a balanced tree is preferred, for good load-balancing. However, in a parallel implementation, there may not be control over the order in which the parallel sub-sums are added, and this must be  taken into  account when designing a parallelization scheme. Also, in a perfectly load-balanced scheme, one may find oneself obliged to group very large summands with very small, thus greatly enhancing the likelihood of computational error. 

Parenthesization also  comes into play when designing a default  summation form  for a given programming language. The specifications for the language may permit the programmer to sum values without defining the parenthesization. In that case, the compiler will address the parenthesization in some default manner that seems optimal.

On the  other hand, ordering influences the amount of rounding error one might see. There are parenthesizations that for certain distributions of summand values are unacceptably likely to give results prone to rounding error, as in the load-balanced scheme discussed above. 

One approach might be to start with a rough ordering, perhaps on subsets of the summands, select groupings that work for each  subset, and then add the  sums of the subset  sums in some appropriate grouping.

One might have computational constraints on the ordering of a summation rather than the grouping, for example on a streaming problem, but we do not address that question in this paper.

\subsection{Online Encyclopedia of Integer Sequences}
When a sequence in this paper associated with a class of summations  turns out to  exist in the Online Encyclopedia of Integer Sequences (OEIS) \cite{OEIS}, we name the OEIS sequence and cite the listing. This paper links several disparate OEIS sequences into a family of summation-related sequences.

The OEIS resource can be very useful in the investigation of mathematical objects associated with sequences. It lists other mathematical objects having the same sequences, and thus can give insight into the mathematics of the problem itself and its connection to other problems. It is well worth looking at any such references given, in the hope of finding commonalities with seemingly unrelated problems.

\section{Trees and summations}

\subsection{Binary tree representation of a summation}
A convenient and clarifying way of representing a summation is as a leaf-labeled rooted full binary tree. A full binary tree is one in which every node has either zero or two children \cite{knuth97art1}. In a summation tree, a node with two children represents the parenthesized sum of its children. The leaf nodes are labeled with the variables, and the internal nodes with +. 

In this paper, we use the terms \emph{summation} and \emph{leaf-labeled summation tree} interchangeably. We also use the terms \emph{parenthetic form} and \emph{unlabeled summation tree} interchangeably.
\subsubsection{Summations and their parenthetic forms}
We consider in this paper both abstract parenthetic forms and actual summations on these forms. 
Parenthetic forms are abstract, correspond to unlabeled summation trees, describe only the parenthesization, and may represent a class of summation trees, or a family of summations. 
Actual summations correspond to leaf-labeled trees, and describe both the parenthesization and the ordering of the variables. A summation is a concrete instantiation of some abstract parenthetic form.

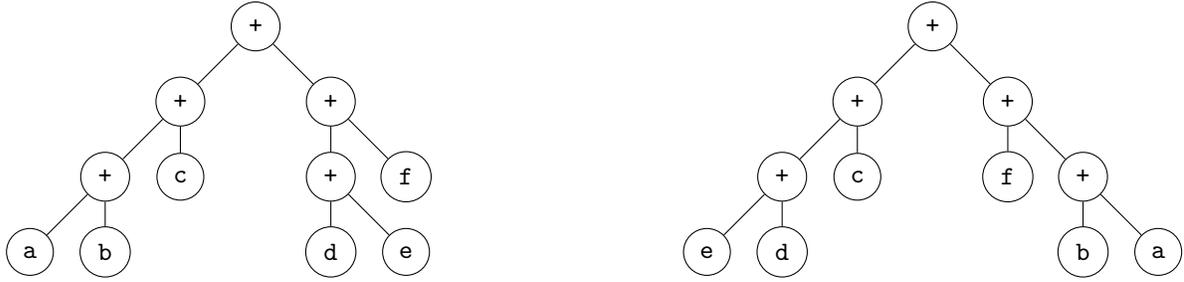
\begin{figure}[h!tb]
	\begin{center}\begin{tikzpicture}
		\tikzstyle{mynode}=[circle, draw]
		\tikzstyle{leaf}=[fill=white]
		\tikzstyle{indirect}=[pattern=north east lines, pattern color=gray!80]
		\tikzstyle{myarrow}=[]
		\node[mynode] (L1) at (0, 3) {\footnotesize\texttt{+}};
		\node[mynode] (L2a) at (-1, 2) {\footnotesize\texttt{+}};
		\node[mynode] (L2b) at (1, 2) {\footnotesize\texttt{+}};
		\node[mynode] (L3a) at (-2, 1) {\footnotesize\texttt{+}};
		\node[mynode,leaf] (L3b) at (-1, 1) {\footnotesize\texttt{c}};
		\node[mynode] (L3d) at (1, 1) {\footnotesize\texttt{+}};
		\node[mynode,leaf] (L3e) at (2, 1) {\footnotesize\texttt{f}};
		\node[mynode, leaf] (L4a) at (-3, 0) {\footnotesize\texttt{a}};
		\node[mynode, leaf] (L4b) at (-2, 0) {\footnotesize\texttt{b}};
		\node[mynode, leaf] (L4g) at (1, 0) {\footnotesize\texttt{d}};
		\node[mynode, leaf] (L4h) at (2, 0) {\footnotesize\texttt{e}};
		\draw[myarrow] (L1) to node[midway,above=0pt, left=7pt]{} (L2a);
		\draw[myarrow] (L1) to node[midway,above=0pt, left=-2.5pt]{} (L2b);
		\draw[myarrow] (L2a) to node[midway,above=0pt, left=4pt]{} (L3a);
		\draw[myarrow] (L2a) to node[midway,above=0pt, right=-1pt]{} (L3b);
		\draw[myarrow] (L2b) to node[midway,above=0pt, left=-0.5pt]{} (L3d);
		\draw[myarrow] (L2b) to node[midway,above=0pt, left=-1pt]{} (L3e);
		\draw[myarrow] (L3a) to node[midway,above=0pt, left=0mm]{} (L4a);
		\draw[myarrow] (L3a) to node[midway,above=0pt, right=-2pt]{} (L4b);
		\draw[myarrow] (L3d) to node[midway,above=0pt, left=0mm]{} (L4g);
		\draw[myarrow] (L3d) to node[midway,above=0pt, right=-2pt]{} (L4h);
		\tikzstyle{mynode}=[circle, draw]
		\tikzstyle{leaf}=[fill=white]
		\tikzstyle{indirect}=[pattern=north east lines, pattern color=gray!80]
		\tikzstyle{myarrow}=[]
		\node[mynode] (L1) at (9, 3) {\footnotesize\texttt{+}};
		\node[mynode] (L2a) at (8, 2) {\footnotesize\texttt{+}};
		\node[mynode] (L2b) at (10, 2) {\footnotesize\texttt{+}};
		\node[mynode] (L3a) at (7, 1) {\footnotesize\texttt{+}};
		\node[mynode,leaf] (L3b) at (8, 1) {\footnotesize\texttt{c}};
		\node[mynode, leaf] (L3d) at (10, 1) {\footnotesize\texttt{f}};
		\node[mynode] (L3e) at (11, 1) {\footnotesize\texttt{+}};
		\node[mynode, leaf] (L4a) at (6, 0) {\footnotesize\texttt{e}};
		\node[mynode, leaf] (L4b) at (7, 0) {\footnotesize\texttt{d}};
		\node[mynode, leaf] (L4g) at (11, 0) {\footnotesize\texttt{b}};
		\node[mynode, leaf] (L4h) at (12, 0) {\footnotesize\texttt{a}};
		\draw[myarrow] (L1) to node[midway,above=0pt, left=7pt]{} (L2a);
		\draw[myarrow] (L1) to node[midway,above=0pt, left=-2.5pt]{} (L2b);
		\draw[myarrow] (L2a) to node[midway,above=0pt, left=4pt]{} (L3a);
		\draw[myarrow] (L2a) to node[midway,above=0pt, right=-1pt]{} (L3b);
		\draw[myarrow] (L2b) to node[midway,above=0pt, left=-0.5pt]{} (L3d);
		\draw[myarrow] (L2b) to node[midway,above=0pt, left=-1pt]{} (L3e);
		\draw[myarrow] (L3a) to node[midway,above=0pt, left=0mm]{} (L4a);
		\draw[myarrow] (L3a) to node[midway,above=0pt, right=-2pt]{} (L4b);
		\draw[myarrow] (L3e) to node[midway,above=0pt, left=0mm]{} (L4g);
		\draw[myarrow] (L3e) to node[midway,above=0pt, right=-2pt]{} (L4h);
		\end{tikzpicture}\end{center}
	\caption{Two isomorphic  summation trees, or parenthetic forms, that represent  computationally inequivalent summations.  These  trees represent $(((a+b)+c)+((d+e)+f))$ and $(((e+d)+c)+(f+(b+a)))$, respectively. Although the summands are the same and the trees are isomorphic in form, the summations themselves are not computationally equivalent, because of the ordering of the summands: the  summation on the right cannot be derived from the summation on the left by a sequence of pairwise transpositions.}
	\label{sum_trees}
\end{figure}

Two  summations that  are computationally equivalent for  any choice of summands must have isomorphic summation trees. However, having isomorphic summation trees is not sufficient for computational equivalence. Figure \ref{sum_trees} is an example of two summations that have isomorphic unlabeled summation trees, yet have non-isomorphic leaf-labeled summation trees and are computationally inequivalent.

\subsubsection{Equivalent parenthetic forms and isomorphic unlabeled trees}
Two unlabeled summation trees are \emph{isomorphic} if one can be obtained from the other by a sequence of reversing the children of some set of nodes. This is in accordance with the pairwise commutativity of addition we assume. Parenthetic forms that are equivalent up to pairwise commutativity of summands map to isomorphic summation trees. 

This refers only to isomorphic equivalence of parenthetic form; it is \emph{not} the same as computational equivalence on instantiated summations, which depends on ordering as well as grouping. 

\subsubsection{Computationally equivalent summations and isomorphic leaf-labeled trees}
Two leaf-labeled summation trees are \emph{isomorphic} if one can be obtained from the other by a sequence of reversing the children of some set of nodes. Again, this is in accordance with the pairwise commutativity of addition. Computationally equivalent summations map to isomorphic leaf-labeled trees.

\subsection{SD-trees and enumerations}
A summation tree can be considered as a type of PQ tree. A PQ tree is a labeled tree representing a set of permutations on $n$ elements, where the $n$ leaf nodes are labeled with the elements and the internal nodes are labeled as $P$ or $Q$, where the $P$ nodes have at least two children with all permutations of the children  equivalent, and the $Q$ nodes have at least three children, with reversals of the orderings of the children equivalent \cite{booth76}. A summation tree is a binary PQ-tree, and since it is binary, all of the internal nodes are labeled with $P$. 

To aid in the enumeration and understanding of summation trees, we introduce a type of PQ summation tree called an SD-tree. We use this structure to establish a  formula and a general method for enumerating members of a class of summations.

\subsubsection{SD-trees}
An \emph{SD-tree} is a binary tree in which all internal nodes have two children, and where an internal node is labeled $S$ if its two children have the \{S\}ame number of descendant leaf nodes, and $D$ if its two children have \{D\}ifferent numbers of descendant leaf nodes. One might also consider a leaf-labeled SD-tree, which is an instantiation of the parenthetic form represented by the SD-tree in question, and represents an actual summation. An SD-tree is a variant of a PQ-tree. An example of a leaf-labeled SD-tree is shown in figure \ref{sd_tree}.

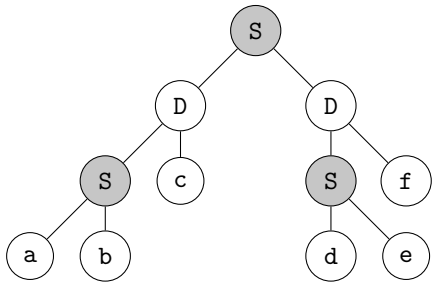
\begin{figure}[h!tb]
	\begin{center}\begin{tikzpicture}
		\tikzstyle{mynode}=[circle, draw]
		\tikzstyle{leaf}=[fill=white]
		\tikzstyle{Snode}=[fill=gray!45]
		\tikzstyle{indirect}=[pattern=north east lines, pattern color=gray!80]
		\tikzstyle{myarrow}=[]
		\node[mynode, Snode] (L1) at (0, 3) {\footnotesize\texttt{S}};
		\node[mynode] (L2a) at (-1, 2) {\footnotesize\texttt{D}};
		\node[mynode] (L2b) at (1, 2) {\footnotesize\texttt{D}};
		\node[mynode, Snode] (L3a) at (-2, 1) {\footnotesize\texttt{S}};
		\node[mynode,leaf] (L3b) at (-1, 1) {\footnotesize\texttt{c}};
		\node[mynode, Snode] (L3d) at (1, 1) {\footnotesize\texttt{S}};
		\node[mynode,leaf] (L3e) at (2, 1) {\footnotesize\texttt{f}};
		\node[mynode, leaf] (L4a) at (-3, 0) {\footnotesize\texttt{a}};
		\node[mynode, leaf] (L4b) at (-2, 0) {\footnotesize\texttt{b}};
		\node[mynode, leaf] (L4g) at (1, 0) {\footnotesize\texttt{d}};
		\node[mynode, leaf] (L4h) at (2, 0) {\footnotesize\texttt{e}};
		\draw[myarrow] (L1) to node[midway,above=0pt, left=7pt]{} (L2a);
		\draw[myarrow] (L1) to node[midway,above=0pt, left=-2.5pt]{} (L2b);
		\draw[myarrow] (L2a) to node[midway,above=0pt, left=4pt]{} (L3a);
		\draw[myarrow] (L2a) to node[midway,above=0pt, right=-1pt]{} (L3b);
		\draw[myarrow] (L2b) to node[midway,above=0pt, left=-0.5pt]{} (L3d);
		\draw[myarrow] (L2b) to node[midway,above=0pt, left=-1pt]{} (L3e);
		\draw[myarrow] (L3a) to node[midway,above=0pt, left=0mm]{} (L4a);
		\draw[myarrow] (L3a) to node[midway,above=0pt, right=-2pt]{} (L4b);
		\draw[myarrow] (L3d) to node[midway,above=0pt, left=0mm]{} (L4g);
		\draw[myarrow] (L3d) to node[midway,above=0pt, right=-2pt]{} (L4h);
		\end{tikzpicture}\end{center}
	\caption{An example  of an SD-tree. This tree represents $(((a+b)+c)+((d+e)+f))$. This is the same summation as in the first tree in Figure \ref{sum_trees}. The nodes labeled $S$ have children with the same number of descendant leaves. The nodes labeled $D$ have children with different numbers of descendant leaves.}
	\label{sd_tree}		
\end{figure}

Two summation SD-trees are  considered isomorphic if one can be obtained from the other by a sequence of reversing the children of some set of nodes (this is the same definition  as in the non-SD summation tree case).  
\begin{lemma}\label{iso_SD}
	Two isomorphic SD-trees with $n$ leaf nodes have the  same number of $S$-nodes and the same number of $D$-nodes.
\end{lemma}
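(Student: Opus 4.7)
The plan is to show that the $S$/$D$ label of any internal node is invariant under the single-node reversal operation that generates the isomorphism relation, and then conclude by induction on the number of reversals in the sequence.

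First I would observe that the label at an internal node $v$ is determined solely by the \emph{unordered pair} of descendant-leaf counts of its two children: $v$ is an $S$-node when those two counts are equal, and a $D$-node when they differ. Reversing the children of a node $v$ is the operation that swaps the left and right subtrees of $v$. This operation does not change the subtrees themselves, only their order, so for every node $w$ in the tree the multiset of descendant leaves of $w$'s two children is unchanged: at $w=v$ the two child subtrees are the same two subtrees in a different order, and at every other $w$ the child subtrees are literally identical before and after. Hence a single reversal preserves the $S$/$D$ label at every internal node, and in particular preserves the total counts of $S$-nodes and $D$-nodes.

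Next, since two SD-trees are isomorphic exactly when one is obtained from the other by a finite sequence of such single-node reversals, I would apply induction on the length of the reversal sequence. The base case (empty sequence) is trivial, since the trees are equal. The inductive step follows immediately from the invariance established above: after any intermediate tree in the sequence, the counts of $S$-nodes and $D$-nodes match those of the original, so in particular the final tree has the same counts as the starting tree.

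There is essentially no hard step here; the entire content of the lemma is the remark that the $S$/$D$ distinction is a property of the unordered pair of child subtree sizes and is therefore stable under the very operation that defines isomorphism. The only mild subtlety, which I would state explicitly to avoid confusion, is that we must regard the labels as being recomputed intrinsically from the tree structure at each stage (rather than carried along as fixed data), so that ``preserved'' means ``the intrinsically defined label of the corresponding node agrees in both trees,'' which is exactly what the argument above shows.
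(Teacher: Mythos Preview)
Your proof is correct and is essentially a fully unpacked version of the paper's own argument, which is the single line ``Follows from pairwise commutativity of addition.'' Both rest on the same observation: the $S$/$D$ label depends only on the unordered pair of child-subtree leaf counts, and the reversal operation defining isomorphism leaves those counts unchanged; you simply make the induction on the reversal sequence explicit where the paper leaves it implicit.
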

\begin{proof}
	Follows from pairwise commutativity of addition.
\end{proof}
\begin{remark}
	The converse is not true. Two SD-trees with $n$ leaf nodes and with the same number  of $S$-nodes and $D$-nodes need not be isomorphic, as in Figure \ref{trees_noniso}.
\end{remark}

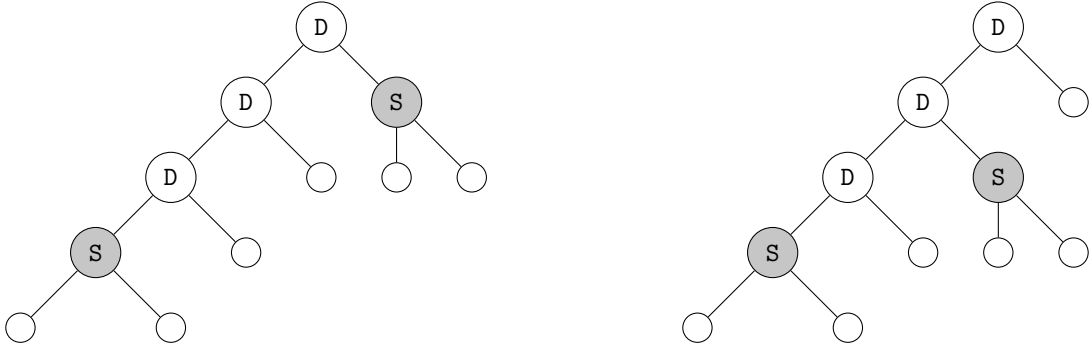
\begin{figure}[h!tb]
	\begin{center}\begin{tikzpicture}
		\tikzstyle{mynode}=[circle, draw]
		\tikzstyle{leaf}=[fill=gray!45]
		\tikzstyle{Snode}=[fill=gray!45]
		\tikzstyle{indirect}=[pattern=north east lines, pattern color=gray!80]
		\tikzstyle{myarrow}=[]
		\node[mynode] (L1) at (0, 3) {\footnotesize\texttt{D}};
		\node[mynode] (L2a) at (-1, 2) {\footnotesize\texttt{D}};
		\node[mynode,Snode] (L2b) at (1, 2) {\footnotesize\texttt{S}};
		\node[mynode] (L3a) at (-2, 1) {\footnotesize\texttt{D}};
		\node[mynode] (L3b) at (0, 1) {\footnotesize\texttt{}};
		\node[mynode] (L3d) at (1, 1) {\footnotesize\texttt{}};
		\node[mynode] (L3e) at (2, 1) {\footnotesize\texttt{}};
		\node[mynode,Snode] (L4a) at (-3, 0) {\footnotesize\texttt{S}};
		\node[mynode] (L4b) at (-1, 0) {\footnotesize\texttt{}};
		\node[mynode] (L5a) at (-4, -1) {\footnotesize\texttt{}};
		\node[mynode] (L5b) at (-2, -1) {\footnotesize\texttt{}};
		\draw[myarrow] (L1) to node[midway,above=0pt, left=7pt]{} (L2a);
		\draw[myarrow] (L1) to node[midway,above=0pt, left=-2.5pt]{} (L2b);
		\draw[myarrow] (L2a) to node[midway,above=0pt, left=4pt]{} (L3a);
		\draw[myarrow] (L2a) to node[midway,above=0pt, right=-1pt]{} (L3b);
		\draw[myarrow] (L2b) to node[midway,above=0pt, left=-0.5pt]{} (L3d);
		\draw[myarrow] (L2b) to node[midway,above=0pt, left=-1pt]{} (L3e);
		\draw[myarrow] (L3a) to node[midway,above=0pt, left=0mm]{} (L4a);
		\draw[myarrow] (L3a) to node[midway,above=0pt, right=-2pt]{} (L4b);
		\draw[myarrow] (L4a) to node[midway,above=0pt, left=0mm]{} (L5a);
		\draw[myarrow] (L4a) to node[midway,above=0pt, right=-2pt]{} (L5b);
		\tikzstyle{mynode}=[circle, draw]
		\tikzstyle{leaf}=[fill=gray!45]
		\tikzstyle{indirect}=[pattern=north east lines, pattern color=gray!80]
		\tikzstyle{myarrow}=[]
		\node[mynode] (L1) at (9, 3) {\footnotesize\texttt{D}};
		\node[mynode] (L2a) at (8, 2) {\footnotesize\texttt{D}};
		\node[mynode] (L2b) at (10, 2) {\footnotesize\texttt{}};
		\node[mynode] (L3a) at (7, 1) {\footnotesize\texttt{D}};
		\node[mynode,Snode] (L3b) at (9, 1) {\footnotesize\texttt{S}};
		\node[mynode,Snode] (L4a) at (6, 0) {\footnotesize\texttt{S}};
		\node[mynode] (L4b) at (8, 0) {\footnotesize\texttt{}};
		\node[mynode] (L4c) at (9, 0) {\footnotesize\texttt{}};
		\node[mynode] (L4d) at (10, 0) {\footnotesize\texttt{}};
		\node[mynode] (L5a) at (5, -1) {\footnotesize\texttt{}};
		\node[mynode] (L5b) at (7, -1) {\footnotesize\texttt{}};
		\draw[myarrow] (L1) to node[midway,above=0pt, left=7pt]{} (L2a);
		\draw[myarrow] (L1) to node[midway,above=0pt, left=-2.5pt]{} (L2b);
		\draw[myarrow] (L2a) to node[midway,above=0pt, left=4pt]{} (L3a);
		\draw[myarrow] (L2a) to node[midway,above=0pt, right=-1pt]{} (L3b);
		\draw[myarrow] (L3a) to node[midway,above=0pt, left=0mm]{} (L4a);
		\draw[myarrow] (L3a) to node[midway,above=0pt, right=-2pt]{} (L4b);
		\draw[myarrow] (L3b) to node[midway,above=0pt, left=0mm]{} (L4c);
		\draw[myarrow] (L3b) to node[midway,above=0pt, right=-2pt]{} (L4d);
		\draw[myarrow] (L4a) to node[midway,above=0pt, left=0mm]{} (L5a);
		\draw[myarrow] (L4a) to node[midway,above=0pt, right=-2pt]{} (L5b);
	\end{tikzpicture}\end{center}
	\caption{Not all trees with the same number of leaves and the same number of $S$-nodes are isomorphic. Here are  two non-isomorphic SD-trees with $6$ leaves,  $2$ $S$-nodes  and $3$ $D$-nodes. This example is smallest in terms of the number of leaves of the SD-tree. (There is also another example of non-isomorphic SD-trees with $6$ leaves, this having $3$ $S$-nodes and $2$ $D$-nodes.)}
	\label{trees_noniso}
\end{figure}

\subsection{Enumeration of summations using SD-trees}\label{method}
If the number of S-nodes in the elements of an equivalence class of summation SD-trees is known or can be quantified, then there is a simple formula to enumerate the computationally inequivalent summations in that class.
\begin{lemma}
	Let $T$ be SD-tree with $n$ leaf nodes. The number of computationally inequivalent leaf-labeled  SD-trees that are  isomorphic to T is  $\frac{n!}{2^\varepsilon}$, where $\varepsilon$ is the number of internal nodes of $T$ labeled S, i.e., with left and right subtrees having the same number of leaf nodes. 
\end{lemma}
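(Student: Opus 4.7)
I would proceed by strong induction on the number of leaves $n$. The base case $n = 1$ is immediate: the SD-tree is a single leaf, $\varepsilon = 0$, and the formula returns $1!/2^0 = 1$, agreeing with the unique labeled single-leaf tree.

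For the inductive step, let $T$ have root $r$ with subtrees $T_L, T_R$ of sizes $k$ and $n-k$ leaves and $\varepsilon_L, \varepsilon_R$ internal $S$-nodes respectively. I would split on the type of $r$. In the $D$-case ($k \neq n-k$), the subtrees are distinguishable by their leaf counts, so an equivalence class of labelings of $T$ corresponds to a choice of which $k$-subset of labels is assigned to $T_L$ together with equivalence classes on $T_L$ and $T_R$ separately. Applying the inductive hypothesis, the count is $\binom{n}{k}\cdot(k!/2^{\varepsilon_L})\cdot((n-k)!/2^{\varepsilon_R})$, which equals $n!/2^{\varepsilon_L+\varepsilon_R}=n!/2^\varepsilon$, since a $D$-root contributes no additional $S$-node. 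In the $S$-case ($k=n-k$), the two subtrees have the same leaf count and are interchanged by the root swap. Treating the two halves of the partition as unordered yields $\binom{n}{k}/2$ partitions; combining with the inductive counts for $T_L$ and $T_R$ and using $\varepsilon=\varepsilon_L+\varepsilon_R+1$ again collapses the product to $n!/2^\varepsilon$.

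The main obstacle is the $S$-case. One has to justify the factor of $1/2$ uniformly, in particular confirming that swapping the root's two subtrees really identifies label-distributions even when $T_L$ and $T_R$ have different parenthetic shapes of the same leaf count, as in Figure \ref{trees_noniso}. One also has to check that this root swap has no fixed points on labelings, since the summands are distinct, so the division by $2$ is exact rather than an average over fixed and non-fixed orbits. The $D$-case reduces to a multinomial calculation using the disjointness of the label sets on $T_L$ and $T_R$, and the base case requires no work.
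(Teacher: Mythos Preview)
Your inductive approach differs from the paper's direct overcount (the paper argues that the $n!$ labelings of a fixed planar representative fall into classes of size $2^\varepsilon$), but the obstacle you flag in the $S$-case is genuine and in fact fatal: the lemma as stated is false in general. Take $n=8$ with root an $S$-node, $T_L$ the $4$-leaf ladder (one $S$-node), and $T_R$ the balanced $4$-leaf tree (three $S$-nodes). Then $\varepsilon=5$, but since $T_L\not\cong T_R$ the root contributes nothing to $\mathrm{Aut}(T)$, and the true count of inequivalent labelings is $8!/|\mathrm{Aut}(T)|=8!/2^4=2520$, not $8!/2^5=1260$. Your $\tfrac12\binom{n}{k}$ step fails here because the root swap sends the labeled tree $(A\to T_L,\ B\to T_R)$ to the \emph{same} labeled non-planar tree redrawn with sides exchanged; it does \emph{not} identify it with the distinct labeled tree $(B\to T_L,\ A\to T_R)$. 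So treating the partition as unordered undercounts by a factor of $2$ at every $S$-node whose subtrees are non-isomorphic.

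The paper's own proof glosses over the same point. The formula $n!/2^\varepsilon$ is correct precisely when every $S$-node has \emph{isomorphic} (not merely equal-size) subtrees, which does hold for the shapes the paper actually applies the lemma to (ladder and pairwise summation trees), so the downstream results are unaffected. Under that extra hypothesis your induction goes through exactly as you outline: in the $S$-case $T_L\cong T_R$, hence $\varepsilon_L=\varepsilon_R$, the root swap genuinely pairs label-distributions, and distinctness of the summands guarantees no fixed points, so the division by $2$ is exact.
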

\begin{proof}
	There are $n!$ ways to order the $n$ leaf nodes of the tree $T$.  Without loss of generality, it suffices to consider SD-trees where the number of leaf descendants of the left child of a node is always less than or equal to the number of leaf descendants of the right child. (This is because of pairwise commutativity.) 
	
	Call the set of left descendant leaf nodes of a node $d_\ell$, and the set of right descendant leaf nodes $d_r$. $|d_\ell|=|d_r|$ if and only if an ordering transposing $d_\ell$ and $d_r$ gives a tree that is the same as the original tree, up  to transposition of  the two subtrees. So  such subtrees are combinatorially counted twice. The number of trees that are the same as another given ordering is $2^\varepsilon$, so the total number of non-equivalent trees isomorphic to T is $\frac{n!}{2^\varepsilon}$.
\end{proof}
\begin{corollary}\label{equiv_sum}
    Let $T$ be  an SD-tree with $n$ leaf nodes. The number of non-equivalent summations corresponding (up to tree isomorphism) to $T$ is $\sigma(n)=\frac{n!}{2^\varepsilon}$, where $\varepsilon$ is the number of internal nodes of $T_{\sigma(n)}$ labeled S, i.e., with left and right subtrees having the same number of leaf nodes. 
\end{corollary}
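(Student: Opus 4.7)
The plan is to observe that this corollary is essentially a rephrasing of the preceding lemma in the language of summations rather than leaf-labeled SD-trees, so the proof should consist of (i) making the correspondence explicit and (ii) invoking the lemma.

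First I would recall that, by the setup in Section~3, a summation on $n$ variables corresponds to a leaf-labeled full binary tree (equivalently, a leaf-labeled SD-tree, since the $S/D$ labels on the internal nodes are determined by the shape of the tree). Two summations are computationally equivalent if and only if their leaf-labeled trees are isomorphic in the sense defined above (i.e., related by a sequence of reversals of the children of some set of internal nodes). So the set of computationally non-equivalent summations whose underlying parenthetic form is given by an SD-tree $T$ is in bijection with the set of isomorphism classes of leaf-labeled SD-trees whose underlying unlabeled tree is $T$.

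Next I would apply the preceding lemma, which counts exactly this quantity: the number of computationally inequivalent leaf-labeled SD-trees isomorphic to $T$ equals $\frac{n!}{2^\varepsilon}$, where $\varepsilon$ is the number of $S$-nodes of $T$. Substituting ``summation'' for ``leaf-labeled SD-tree'' on the strength of the bijection above yields $\sigma(n) = \frac{n!}{2^\varepsilon}$, as claimed.

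There is no substantive obstacle here; the only thing to be careful about is that the count of $S$-nodes is an invariant of the isomorphism class of $T$ (so that the formula is well-defined no matter which representative of the class we pick). This is exactly the content of Lemma~\ref{iso_SD}, which guarantees that $\varepsilon$ depends only on the equivalence class of $T$, not on a particular labeled representative. Once that is noted, the corollary follows immediately.
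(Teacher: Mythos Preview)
Your proposal is correct and matches the paper's approach: the paper gives no explicit proof for this corollary, treating it as an immediate restatement of the preceding lemma in the language of summations, which is precisely what you do. Your additional remark that Lemma~\ref{iso_SD} ensures $\varepsilon$ is well-defined on the isomorphism class is a nice touch of care that the paper leaves implicit.
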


\subsection{A general method for enumeration of summations}
Corollary \ref{equiv_sum} implies that it suffices to (1) show that all summation trees representing class members are isomorphic to each other as SD-trees, 
(2)  show that all summations with trees isomorphic to that class are in the class, and (3) provide a formula for the number of $S$ nodes in the representative summation tree. The enumeration then follows immediately. We will follow this method throughout this paper to enumerate summations of certain interesting classes.

\subsection{Parenthetic constraints on summations}
Certain classes of summation problems have  constraints placed on the grouping that may induce  isomorphic parenthetic forms. When such constraints exist, we can discuss the problem in terms of the summation trees that have forms conforming to the constraints. 
In other words, we fix the parenthesization in accordance with the constraint, then set the ordering. 

In the next two sections, we first discuss summations on $n$ variables, grouped and ordered without any constraints on the grouping. We then turn to summations with grouping constraints that induce specific parenthetic forms, and discuss in depth two important cases.

\section{Summations with no grouping constraints}

We assume in this section that all summations are completely and explicitly parenthesized, and that pairwise commutativity holds. We make no other assumptions; in other words, there are no constraints imposed on the parenthesizations.

The following formula is known to apply to leaf-labeled binary trees, and is shown by Stanley \cite{stanley97},  Callan \cite{callan09}  and Dale \cite{dale93}. Walters makes this observation on commutative, non-associative multiplication, which has the same  properties needed for this purpose as addition \cite{OEISA001147}.

We observe that the number of computationally inequivalent summations on $n$ variables is the same as the number of leaf-labeled rooted binary trees with $n$ leaf nodes. We give a sketch of a tree-based proof here, and also provide a new combinatorial proof.
\begin{proposition}\label{ineq_sum}\cite{stanley97, callan09, dale93}
	The number of  computationally inequivalent summations on $n$ terms is
	\begin{equation}
	(2n-3)!!
	\end{equation} 	
\end{proposition}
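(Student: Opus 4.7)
The plan is to prove the formula by induction on $n$, establishing the recurrence $T_n = (2n-3)\,T_{n-1}$ with base case $T_1 = 1$; telescoping then gives $T_n = (2n-3)(2n-5)\cdots 3 \cdot 1 = (2n-3)!!$. The inductive step amounts to a leaf-insertion bijection between computationally inequivalent summations on $\{x_1, \ldots, x_n\}$ and pairs consisting of an inequivalent summation on $\{x_1, \ldots, x_{n-1}\}$ together with one of $2n-3$ insertion sites for $x_n$.

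Concretely, given a summation tree on $n$ leaves, I would remove the leaf $x_n$ along with its internal-node parent, grafting the sibling subtree in place of the parent (or promoting the sibling to the new root, if the parent was itself the root). This yields a well-defined summation tree on $n-1$ leaves. Conversely, a summation tree on $n-1$ leaves admits exactly $2n-3$ places to reinsert $x_n$: one may subdivide any edge of the tree with a new internal $+$ node and hang $x_n$ off it, or one may introduce a brand-new root whose children are the old tree and $x_n$. Since a full binary tree with $n-1$ leaves has $n-2$ internal nodes, hence $2n-3$ nodes and $2n-4$ edges, the total is $(2n-4) + 1 = 2n-3$. Distinct insertion sites produce leaf-labeled trees that remain non-isomorphic even after identifying by pairwise commutativity at each internal node, so the recurrence follows.

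For the new combinatorial proof, the plan is to exhibit a direct bijection between inequivalent summations on $n$ terms and a family of combinatorial objects independently enumerated by $(2n-3)!!$, such as perfect matchings of $\{1, \ldots, 2n-2\}$. A natural encoding reads the tree as a sequence of $n-1$ pair-merges performed in a canonical order driven by the leaf labels (for instance, at each step merge the cherry containing the smallest-labeled available atom), recording each merge as an unordered pair from the current atom set and assigning the merged supernode the next fresh symbol from $\{n+1, \ldots, 2n-1\}$. The resulting $n-1$ pairs use each element of $\{1, \ldots, 2n-2\}$ exactly once, giving a matching; the inverse map replays the merge history.

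The hard part will be verifying the combinatorial bijection: one must pin down the canonical merge rule so that the encoding is demonstrably both injective and surjective onto the target set, with no double counting introduced by the commutativity identifications at internal nodes. By contrast, the inductive tree argument is essentially bookkeeping, and the only subtlety there is confirming that the $2n-3$ insertion sites yield pairwise non-isomorphic leaf-labeled trees at each step.
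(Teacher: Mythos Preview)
Your inductive leaf-insertion argument is exactly the paper's Proof~1 (tree-based sketch): both establish $T_n=(2n-3)T_{n-1}$ by counting the $2n-3$ places to attach a new leaf to an $(n-1)$-leaf summation tree, and your version is in fact more carefully articulated than the paper's sketch.

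Your proposed ``new combinatorial proof,'' however, diverges from the paper's Proof~2. The paper does \emph{not} construct a bijection to perfect matchings; instead it counts directly: there are $n!$ orderings of the summands, $C_{n-1}$ parenthesizations of each ordered string (the Catalan count), and each computational equivalence class has size $2^{n-1}$ since commutativity may be applied independently at each of the $n-1$ internal $+$ nodes. The resulting quotient $n!\,C_{n-1}/2^{n-1}$ simplifies algebraically to $(2n-3)!!$. This route is short and requires no canonical encoding at all. Your matching bijection is a legitimate alternative and would give a more structural explanation, but as you yourself note, the hard part is nailing down the canonical merge rule; in particular, your suggested rule ``merge the cherry containing the smallest-labeled available atom'' is problematic as stated, since the smallest-labeled leaf need not lie in any cherry. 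A workable variant (e.g., pick the cherry whose smaller label is least among all cherries) exists, but the paper's counting argument sidesteps this entirely.
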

\begin{proof}[Proof 1, Tree-based Proof (sketch).] 	
	We proceed via induction, and note that all $n$-variable sums are obtained by taking the appropriate $(n-1)$-variable sum and adding the $n^\text{th}$ variable to one of its sub-sums. There are $(n-2)$ + signs in an $(n-1)$-variable sum, so the number of sub-sums to either side of a + sign is $2((n-2)-1)$. Including the expression itself gives at least $2((n-2)-1)+1 = (2n-3)$ possible sub-sums. Since there is no overlap, there is no duplication, and this is the exact number. So by the induction hypothesis, the number of computationally inequivalent summations on $n$ variables is $(2n-3)(2(n-1)-3)!!=(2n-3)(2n-5)!!=(2n-3)!!$.
\end{proof}	
\begin{proof}[Proof 2, Combinatorial Proof] 	
	We can order the $n$ variables in $n!$ ways to make $n!$ unparenthesized sums. 
	For each of these sums, there are $C_{n-1}= \frac{1}{n}\binom{2(n-1)}{n-1}$ distinct ways to parenthesize them,
	where $C_{n-1}$ is the $n^\text{th}$ Catalan number \cite{stanley97, OEISA000108}. 
	
	Each sum contains $n-1$ addition signs, and since the addends on either side of an addition sign can appear in two different orientations, each expression is equivalent to $2^{n-1}$ expressions under commutativity.  Thus, there are 
	\begin{align*}
	\frac{n! \cdot \frac{1}{n} \cdot 
	\binom{2(n-1)} {n-1}  }   {2^{n-1}}
	&=  \frac {(n-1)!   (2(n-1))! } {(n-1)! (n-1)! \cdot 2^{n-1}}  \\
	&= \frac {(2n-2)!} 
	{(n-1)! \cdot 2^{n-1}}\\
	&= \frac {(2n-2)!} 
	{ (2n-2) \cdot (2n-4) \cdots 4 \cdot 2} \\
	&=  \frac {(2n-2) (2n-3)(2n-4) \cdots 5 \cdot 4 \cdot 3 \cdot 2 \cdot 1} 
	{ (2n-2) \cdot (2n-4) \cdots 4 \cdot 2} \\
	&= (2n-3)!!
	\end{align*}	
\end{proof}
\begin{remark}
	This sequence in Proposition \ref{ineq_sum} is OEIS A001147 \cite{OEISA001147}. There are many different interpretations of this sequence. 
\end{remark}

\section{Summations with grouping constraints}
We discuss pairwise and ladder summation in depth in this section, two important cases that are used by default in certain widely used programming languages. Both of these have grouping constraints and both are examples of classes that can be described in terms of isomorphic SD-trees.

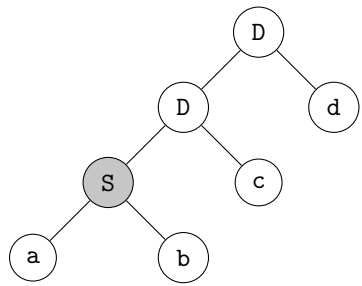
\begin{figure}[h!tb]
	\begin{center}\begin{tikzpicture}
	\tikzstyle{mynode}=[circle, draw]
	\tikzstyle{leaf}=[fill=white]
	\tikzstyle{Snode}=[fill=gray!45]
	\tikzstyle{indirect}=[pattern=north east lines, pattern color=gray!80]
	\tikzstyle{myarrow}=[]
	\node[mynode] (L1) at (0, 3) {\footnotesize\texttt{D}};
	\node[mynode] (L2a) at (-1, 2) {\footnotesize\texttt{D}};
	\node[mynode,leaf] (L2b) at (1, 2) {\footnotesize\texttt{d}};
	\node[mynode,Snode] (L3a) at (-2, 1) {\footnotesize\texttt{S}};
	\node[mynode,leaf] (L3b) at (0, 1) {\footnotesize\texttt{c}};
	\node[mynode, leaf] (L4a) at (-3, 0) {\footnotesize\texttt{a}};
	\node[mynode, leaf] (L4b) at (-1, 0) {\footnotesize\texttt{b}};
	\draw[myarrow] (L1) to node[midway,above=0pt, left=7pt]{} (L2a);
	\draw[myarrow] (L1) to node[midway,above=0pt, left=-2.5pt]{} (L2b);
	\draw[myarrow] (L2a) to node[midway,above=0pt, left=4pt]{} (L3a);
	\draw[myarrow] (L2a) to node[midway,above=0pt, right=-1pt]{} (L3b);
	\draw[myarrow] (L3a) to node[midway,above=0pt, left=0mm]{} (L4a);
	\draw[myarrow] (L3a) to node[midway,above=0pt, right=-2pt]{} (L4b);
	\end{tikzpicture}\end{center}
	\caption{An  example  of a ladder,  or serial, SD-tree, representing $(((a+b)+c)+d)$.}
	\label{ladder_tree}		
\end{figure}

\subsection{Ladder (serial) summation}
A ladder summation is one in which the summation proceeds pairwise in the order in which the terms appear. For example, the ladder summation on four summands is $(((a+b)+c)+d)$, and its tree is shown in Figure \ref{ladder_tree}. Ladder summation corresponds to the C language default of left-to-right associativity on summations with ungrouped summands \cite{C18}.
\begin{remark}
	By pairwise commutativity, the same result is guaranteed in C on an IEEE-754-compliant system upon transposition of the first two elements of an ungrouped summation,
	but is not guaranteed after any other transposition. 
\end{remark}
\begin{lemma}
	There is a unique ladder SD-tree with $n$ leaf nodes, and it has exactly one $S$-node.
\end{lemma}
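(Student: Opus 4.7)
The plan is to argue uniqueness of the tree structure first, then compute the $S$/$D$ labels directly from that structure.

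For uniqueness, I would unpack the definition of a ladder summation on $n$ summands $a_1,\ldots,a_n$ as the parenthetic form $(\cdots((a_1+a_2)+a_3)+\cdots+a_n)$. This specifies, for every internal node, which child is the leaf $a_i$ and which child is the accumulated subsum; up to the pairwise-commutativity equivalence of swapping children (under which all SD-tree isomorphism is defined, by the paragraph preceding Lemma \ref{iso_SD}), this determines a single tree. So there is a unique ladder SD-tree on $n$ leaves.

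For the label count, I would describe the tree explicitly: the root has a left subtree that is itself the ladder SD-tree on $n-1$ leaves and a right child that is a single leaf. Unwinding this recursion, there are $n-1$ internal nodes; the internal node at depth $k$ (counting the root as depth $0$) has a subtree of size $n-k$ on one side and a single leaf $a_{n-k+1}$ on the other side, for $k=0,1,\ldots,n-3$, and the deepest internal node pairs the two leaves $a_1$ and $a_2$.

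From this description the $S$/$D$ labels are immediate: for the internal nodes at depths $0,\ldots,n-3$, the two subtrees have sizes $n-k-1$ and $1$ with $n-k-1 \geq 2$, so each of these is a $D$-node; the unique deepest internal node has two children that are leaves, each of size $1$, so it is an $S$-node. Thus the ladder SD-tree has exactly one $S$-node (and $n-2$ $D$-nodes, consistent with a total of $n-1$ internal nodes for a full binary tree on $n$ leaves).

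There is no real obstacle here; the only thing to be careful about is making the uniqueness claim precise relative to the equivalence used in the paper (reversing children of a set of nodes), and checking the edge case $n=2$, where the tree is a single $S$-node over two leaves, which also satisfies the statement.
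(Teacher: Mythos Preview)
Your proposal is correct and follows essentially the same approach as the paper: both arguments observe that every internal node of the ladder tree except the deepest one has a single leaf on one side and a larger subtree on the other (hence is a $D$-node), while the deepest internal node pairs two leaves (hence is the unique $S$-node). Your version is simply more explicit about uniqueness up to the child-swapping equivalence and about the depth-indexed description, and you add the $n=2$ edge case, but the underlying idea is identical.
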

\begin{proof}
	The lowest internal level of the a ladder-summation tree consists of one node with two leaf children. Every other node in the tree has one child that is a leaf; the other is the root of another ladder summation. So only the unique lowest internal node is labeled S.
\end{proof}
\begin{lemma}
	All SD-trees having exactly one $S$-node are isomorphic to some ladder SD-tree.
\end{lemma}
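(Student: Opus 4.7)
The plan is to proceed by induction on the number of leaves $n$, with the base case $n=2$ being immediate since the unique SD-tree on two leaves is a single $S$-node with two leaf children, which is the ladder.

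Before the inductive step, I would establish the following structural helper: every SD-tree with $n \geq 2$ leaves contains at least one $S$-node. To see this, choose an internal node $v$ minimizing the number of descendant leaves. Both children of $v$ must themselves be leaves, for otherwise an internal child would have strictly fewer descendant leaves than $v$, contradicting minimality. Hence $v$ has two leaf children and is labeled $S$. This bound on the number of $S$-nodes is what drives the rest of the argument.

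For the inductive step, assume the claim holds for all SD-trees with fewer than $n$ leaves, and let $T$ be an SD-tree with $n$ leaves and exactly one $S$-node. Let $T_L, T_R$ be the left and right subtrees of the root, with $\ell, r$ leaves respectively. First suppose the root is labeled $S$, so $\ell = r = n/2$. If $n/2 \geq 2$, then by the helper each of $T_L, T_R$ contains at least one $S$-node, giving $T$ at least three $S$-nodes (the two plus the root), a contradiction. Hence $n/2 = 1$, so $n=2$ and $T$ is the ladder on two leaves. Otherwise the root is labeled $D$. If both $\ell, r \geq 2$, then by the helper $T_L$ and $T_R$ each contribute at least one $S$-node, giving $T$ at least two $S$-nodes, a contradiction. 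Hence, up to pairwise commutativity, we may assume $\ell = 1$ and $r = n-1$. Then $T_R$ is an SD-tree with $n-1 \geq 2$ leaves that must contain the unique $S$-node of $T$ (it contains at least one by the helper, and cannot contain more since $T$ has only one). By the inductive hypothesis, $T_R$ is isomorphic as an SD-tree to the ladder on $n-1$ leaves, so attaching the single leaf $T_L$ as the other child of the $D$-labeled root produces a tree isomorphic to the ladder on $n$ leaves.

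The main obstacle is the helper lemma guaranteeing at least one $S$-node in any SD-tree with two or more leaves; once that is in hand, the case analysis on the root label and the use of pairwise commutativity to arrange $\ell \leq r$ make the induction go through smoothly.
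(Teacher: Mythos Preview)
Your proof is correct and follows essentially the same approach as the paper: the paper's proof hinges on the same helper fact (``any subtree of an SD-tree must have at least one $S$-node, except when it consists of a single leaf node''), uses it to conclude that one of the two top branches must be a single leaf, and then invokes induction on the number of leaves. Your version is simply a more explicit and carefully case-split rendering of the same argument.
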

\begin{proof}
	Any subtree of an SD-tree must have at least one $S$-node, except when it consists of a single leaf node. This means one of the top branches of an SD-tree with exactly one $S$-node is a single leaf node. The conclusion follows by induction on the number of leaf nodes.
\end{proof}
\begin{corollary} \label{laddercor}
	Ladder SD-trees are exactly those with one $S$-node, up to isomorphism.
\end{corollary}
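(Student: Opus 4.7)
The plan is to observe that this corollary is essentially a packaging of the two preceding lemmas into a biconditional characterization, so no new ideas are needed beyond citing them. The statement ``ladder SD-trees are exactly those with one $S$-node, up to isomorphism'' unpacks into two directions, one per lemma.

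First I would handle the forward direction: if $T$ is a ladder SD-tree on $n$ leaves, then by the first lemma (uniqueness of the ladder SD-tree on $n$ leaves together with the fact that it has exactly one $S$-node), $T$ has exactly one $S$-node. So every ladder SD-tree belongs to the class of SD-trees with a single $S$-node.

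Next I would handle the reverse direction: if $T$ is any SD-tree with exactly one $S$-node, then by the second lemma, $T$ is isomorphic to some ladder SD-tree. Combined with the uniqueness clause from the first lemma (there is a unique ladder SD-tree on any given number of leaves), this pins $T$ down up to isomorphism.

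Since both lemmas are already proved, the only ``work'' here is to note that together they give a logical equivalence, modulo the isomorphism relation introduced earlier in the section. There is no real obstacle; if anything, the one subtlety worth pointing out is that the ``up to isomorphism'' qualifier is essential, since by Lemma~\ref{iso_SD} the counts of $S$- and $D$-nodes are isomorphism invariants, so the characterization by ``number of $S$-nodes equals one'' is well-defined on isomorphism classes. I would conclude with a one-line proof that simply cites the two lemmas.
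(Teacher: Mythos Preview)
Your proposal is correct and matches the paper's approach exactly: the paper states this corollary immediately after the two lemmas without any written proof, treating it as an obvious consequence of combining them. Your unpacking into the two directions, one per lemma, is precisely the intended reading.
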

\begin{proposition}\label{numladder}
	The number of computationally inequivalent ladder summations on $n$ variables is 
	\begin{equation}
		\frac{n!}{2}
	\end{equation}
\end{proposition}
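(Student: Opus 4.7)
The plan is to apply the general three-step method described in Section \ref{method} (the one summarized right after Corollary \ref{equiv_sum}), since all the ingredients have already been established in the preceding lemmas. Concretely, I would assemble the proof as a short application of Corollary \ref{equiv_sum}.

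First, I would invoke the preceding lemma which asserts that there is a unique ladder SD-tree on $n$ leaf nodes: this shows that every ladder summation on $n$ variables is a labeling of the same underlying SD-tree, so the ladder summations do form a single class of leaf-labeled SD-trees up to isomorphism. Next, I would invoke Corollary \ref{laddercor}, which gives the converse: any SD-tree with exactly one $S$-node is isomorphic to the ladder SD-tree, so every summation whose SD-tree is isomorphic to the ladder tree is itself a ladder summation. Together, these two facts tell us that the ladder summations on $n$ variables are exactly the leaf-labelings of the unique ladder SD-tree on $n$ leaves.

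Having identified the class with an SD-tree isomorphism class, I would then apply Corollary \ref{equiv_sum}: the number of computationally inequivalent summations whose SD-tree is isomorphic to a given SD-tree $T$ on $n$ leaves is $n!/2^{\varepsilon}$, where $\varepsilon$ is the number of $S$-nodes of $T$. By the first lemma above, the ladder SD-tree has $\varepsilon = 1$, so the count is
\begin{equation*}
\frac{n!}{2^{1}} = \frac{n!}{2},
\end{equation*}
as claimed.

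There is essentially no obstacle here: all the structural work has been done in the preceding lemmas and corollary, and the argument is just bookkeeping. The only thing worth being careful about is making the logical order explicit — first pin down that the class is a single SD-tree isomorphism class (using both the uniqueness lemma and Corollary \ref{laddercor}), and only then plug $\varepsilon = 1$ into Corollary \ref{equiv_sum}. One might also add a brief sanity check for small $n$ (e.g., $n=3$ giving $3$ ladder summations, matching $3!/2 = 3$) to reassure the reader, but this is not necessary for the proof.
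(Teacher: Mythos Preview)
Your proposal is correct and follows essentially the same approach as the paper, which simply states that the result follows immediately from Corollary~\ref{equiv_sum} and Corollary~\ref{laddercor}. You have merely spelled out the logic in more detail (also invoking the uniqueness lemma that gives $\varepsilon = 1$), but the underlying argument is identical.
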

\begin{proof}
	Follows immediately from Corollary \ref{equiv_sum} and Corollary \ref{laddercor}.
\end{proof}
\begin{remark}  
	This is OEIS A001710 \cite{OEISA001710}, the number of even permutations on $n$ letters.
\end{remark}

\subsection{Pairwise summation}
Pairwise summation is a recursive summation method that proceeds by dividing the set of $n$ summands in half (or almost in half, if $n$ is odd), summing on the two subsets, then adding the two sums. Pairwise summation is known to be fairly accurate, and in some cases is nearly as accurate as such gold-standard techniques as Kahan summation \cite{higham93}. Pairwise summation is the default on ungrouped summands in NumPy \cite{numpysum} and in Julia \cite{julia}.

We discuss here three different forms for the sequence $\sigma(n)$, the number of pairwise summations on $n$ variables. Two are recursive and one is a closed form. One of the recursive  formulations is previously known, and is shown by David in \cite{david88}. 
\begin{proposition}\label{pairrecursiveold} \cite{david88}
	The number of computationally inequivalent pairwise summations on $n$ variables is 
	\begin{equation}\label{recursive1_A096351}
		\sigma(n)=
		\begin{cases}
			\frac{1}{2}\binom{2m} {m}\sigma(m)^2, & \text{if}\ n=2m; \\
			\binom{2m+1}{m}\sigma(m)\sigma(m+1), & \text{if}\ n=2m+1.
		\end{cases}
	\end{equation}	 
\end{proposition}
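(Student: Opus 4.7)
The plan is to induct on $n$, exploiting the recursive definition of pairwise summation: the pairwise tree on $n$ leaves is built by joining the pairwise tree on $\lfloor n/2 \rfloor$ leaves to the pairwise tree on $\lceil n/2 \rceil$ leaves at a new root. First I would check that up to SD-tree isomorphism there is a unique pairwise shape for each $n$, so that the enumeration reduces to counting labeled instantiations of a single shape. To count these instantiations, I would choose, for each class representative, which of the $n$ summands go into the left subtree and which into the right, and then recursively label each subtree using $\sigma$ of the appropriate size.

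When $n = 2m+1$ the two subtrees have sizes $m$ and $m+1$, so the root is a $D$-node. Any bipartition of the $n$ summands into sets of sizes $m$ and $m+1$ yields a distinct equivalence class of labeled trees, because the two subtrees are non-isomorphic as SD-trees and therefore cannot be exchanged under commutativity at the root. There are $\binom{2m+1}{m}$ such bipartitions and, by the inductive hypothesis, $\sigma(m)\sigma(m+1)$ independent labelings of the two sides, which together give the odd case of the formula.

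When $n = 2m$ the root is an $S$-node, since both subtrees have $m$ leaves. Here there are $\binom{2m}{m}$ ordered bipartitions, but swapping the two size-$m$ halves yields the same equivalence class by pairwise commutativity at the root; this is exactly the factor of $2$ contributed by this node in the exponent $\varepsilon$ of Corollary \ref{equiv_sum}. Dividing by $2$ and multiplying by the $\sigma(m)^2$ labelings per ordered bipartition then produces $\tfrac{1}{2}\binom{2m}{m}\sigma(m)^2$. The subtlest point, and the one I would pin down most carefully, is that the factor of $2$ removed in the even case corresponds exclusively to the root $S$-node: the $S$-nodes strictly inside each subtree are already folded into $\sigma(m)$ by the induction hypothesis, so no $S$-node is double-counted and no $S$-node is missed. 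With base cases $\sigma(1) = \sigma(2) = 1$, the induction then closes.
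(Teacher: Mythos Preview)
Your argument is correct and is precisely the classical recursion that the paper attributes to David~\cite{david88}. Note, however, that the paper does \emph{not} supply its own proof of this proposition: it states the formula with a citation and moves on, so there is no in-paper argument to compare against. Your inductive decomposition---split the $n$ labels between the two subtrees, recurse on each side, and divide by~$2$ at the root exactly when the root is an $S$-node---is the standard derivation, and your justification that the factor of~$2$ in the even case corresponds solely to the root $S$-node (with all deeper $S$-nodes already absorbed into $\sigma(m)$) is the right way to close the induction cleanly.

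It is worth observing that the paper's own contribution here is not a proof of this recursion but rather the alternative formulas in Corollaries~\ref{pairrecursivenew} and~\ref{pairclosednew}, obtained by counting $S$-nodes directly via Propositions~\ref{exp_recursive} and~\ref{exp_closed} and then invoking Corollary~\ref{equiv_sum}. Your proof and the paper's SD-tree method are two sides of the same coin: yours builds $\sigma(n)$ recursively from $\sigma(\lfloor n/2\rfloor)$ and $\sigma(\lceil n/2\rceil)$, while the paper computes the exponent $\varepsilon(n)$ recursively and reads off $\sigma(n)=n!/2^{\varepsilon(n)}$ in one stroke. One small remark: in the odd case you justify distinctness by saying the two subtrees are ``non-isomorphic as SD-trees''; it is simpler (and sufficient) to say they have different leaf counts, since that alone forces the root to be a $D$-node and blocks any identification under commutativity.
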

\begin{remark}
	Enumerating the pairwise summations on $n$ elements is equivalent to enumerating the tournaments on $n$ teams. This is the classical formulation of this problem, and gives rise to the sequence OEIS A096351  \cite{OEISA096351}. 
\end{remark}

\begin{figure}[h!tb]
	\tikzstyle{mynode}=[circle, draw]
	\tikzstyle{rectanglenode}=[rectangle, draw]
	\tikzstyle{Snode}=[fill=gray!45]
	\tikzstyle{leaf}=[fill=gray!45]
	\tikzstyle{indirect}=[pattern=north east lines, pattern color=gray!80]
	\tikzstyle{myarrow}=[]
	\centering
	\begin{subfigure}[t]{.36\textwidth}
		\begin{tikzpicture}
			\centering
			% tree 0: 4 elements
			\node[mynode,Snode] (L1) at (4.5, 11) {\footnotesize\texttt{S}};
			\node[mynode,Snode] (L2a) at (3, 10) {\footnotesize\texttt{S}};
			\node[mynode,Snode] (L2b) at (6, 10) {\footnotesize\texttt{S}};
			\node[mynode] (L3a) at (2, 9) {\footnotesize\texttt{a}};
			\node[mynode] (L3d) at (5.5, 9) {\footnotesize\texttt{e}};
			\node[mynode] (L3b) at (3.5, 9) {\footnotesize\texttt{c}};
			\node[mynode] (L3e) at (7, 9) {\footnotesize\texttt{g}};
			\draw[myarrow] (L1) to node[midway,above=0pt, left=7pt]{} (L2a);
			\draw[myarrow] (L1) to node[midway,above=0pt, left=-2.5pt]{} (L2b);
			\draw[myarrow] (L2a) to node[midway,above=0pt, left=4pt]{} (L3a);
			\draw[myarrow] (L2a) to node[midway,above=0pt, right=-1pt]{} (L3b);
			\draw[myarrow] (L2b) to node[midway,above=0pt, left=-0.5pt]{} (L3d);
			\draw[myarrow] (L2b) to node[midway,above=0pt, left=-1pt]{} (L3e);
		\end{tikzpicture}
    	\caption{Pairwise summation SD-tree for the set of 4 summands $\{a,c,e,g\}$. This tree has  3 $S$-nodes.}
	\end{subfigure}%
	
	\begin{subfigure}[t]{.45\textwidth}
		\begin{tikzpicture}	
			\centering
			% tree 1: 5 elements
			\node[mynode] (L1) at (0, 7) {\footnotesize\texttt{D}};
			\node[mynode] (L2a) at (-1.5, 6) {\footnotesize\texttt{D}};
			\node[mynode,Snode] (L2b) at (1.5, 6) {\footnotesize\texttt{S}};
			\node[mynode,Snode] (L3a) at (-2.5, 5) {\footnotesize\texttt{S}};	

			\node[mynode] (L3d) at (1, 5) {\footnotesize\texttt{e}};
			\node[mynode] (L3b) at (-1, 5) {\footnotesize\texttt{c}};
			\node[mynode] (L3e) at (2.5, 5) {\footnotesize\texttt{g}};
			\node[mynode] (L4a) at (-3.5, 4) {\footnotesize\texttt{a}};
			\node[mynode] (L4b) at (-2.5, 4) {\footnotesize\texttt{b}};
			\draw[myarrow] (L1) to node[midway,above=0pt, left=7pt]{} (L2a);
			\draw[myarrow] (L1) to node[midway,above=0pt, left=-2.5pt]{} (L2b);
			\draw[myarrow] (L2a) to node[midway,above=0pt, left=4pt]{} (L3a);
			\draw[myarrow] (L2a) to node[midway,above=0pt, right=-1pt]{} (L3b);
			\draw[myarrow] (L2b) to node[midway,above=0pt, left=-0.5pt]{} (L3d);
			\draw[myarrow] (L2b) to node[midway,above=0pt, left=-1pt]{} (L3e);
			\draw[myarrow] (L3a) to node[midway,above=0pt, left=0mm]{} (L4a);
			\draw[myarrow] (L3a) to node[midway,above=0pt, right=-2pt]{} (L4b);
    	\end{tikzpicture}
    	\caption{Pairwise summation SD-tree for the set of 5 summands $\{a,b,c,e,g\}$. This tree has  2 $S$-nodes.}
	\end{subfigure}\hfill%
	\begin{subfigure}[t]{.5\textwidth}
		\begin{tikzpicture}	
			\centering
			% tree 2: 6 elements
			\node[mynode,Snode] (L1) at (9, 7) {\footnotesize\texttt{S}};
			\node[mynode] (L2a) at (7.5, 6) {\footnotesize\texttt{D}};
			\node[mynode] (L2b) at (10.5, 6) {\footnotesize\texttt{D}};
			\node[mynode,Snode] (L3a) at (6.5, 5) {\footnotesize\texttt{S}};
			\node[mynode,Snode] (L3d) at (10, 5) {\footnotesize\texttt{S}};
			\node[mynode] (L3b) at (8, 5) {\footnotesize\texttt{c}};
			\node[mynode] (L3e) at (11.5, 5) {\footnotesize\texttt{g}};
			\node[mynode] (L4a) at (5.5, 4) {\footnotesize\texttt{a}};
			\node[mynode] (L4b) at (6.5, 4) {\footnotesize\texttt{b}};
			\node[mynode] (L4e) at (9.5, 4) {\footnotesize\texttt{e}};
			\node[mynode] (L4f) at (10.5, 4) {\footnotesize\texttt{f}};
			\draw[myarrow] (L1) to node[midway,above=0pt, left=7pt]{} (L2a);
			\draw[myarrow] (L1) to node[midway,above=0pt, left=-2.5pt]{} (L2b);
			\draw[myarrow] (L2a) to node[midway,above=0pt, left=4pt]{} (L3a);
			\draw[myarrow] (L2a) to node[midway,above=0pt, right=-1pt]{} (L3b);
			\draw[myarrow] (L2b) to node[midway,above=0pt, left=-0.5pt]{} (L3d);
			\draw[myarrow] (L2b) to node[midway,above=0pt, left=-1pt]{} (L3e);
			\draw[myarrow] (L3a) to node[midway,above=0pt, left=0mm]{} (L4a);
			\draw[myarrow] (L3a) to node[midway,above=0pt, right=-2pt]{} (L4b);
			\draw[myarrow] (L3d) to node[midway,above=0pt, left=0mm]{} (L4e);
			\draw[myarrow] (L3d) to node[midway,above=0pt, right=-2pt]{} (L4f);
    	\end{tikzpicture}
		\caption{Pairwise summation SD-tree for the set of 6 summands $\{a,b,c,e,f,g\}$. This tree has  3 $S$-nodes.}
	\end{subfigure}%

	\begin{subfigure}[t]{.45\textwidth}
		\begin{tikzpicture}	
			\centering
		 	%tree 3: 7 elements
			\node[mynode] (L1) at (0, 3) {\footnotesize\texttt{D}};
			\node[mynode,Snode] (L2a) at (-1.5, 2) {\footnotesize\texttt{S}};
			\node[mynode] (L2b) at (1.5, 2) {\footnotesize\texttt{D}};
			\node[mynode,Snode] (L3a) at (-2.5, 1) {\footnotesize\texttt{S}};
			\node[mynode,Snode] (L3b) at (-1, 1) {\footnotesize\texttt{S}};
			\node[mynode,Snode] (L3d) at (1, 1) {\footnotesize\texttt{S}};
			\node[mynode] (L3e) at (2.5, 1) {\footnotesize\texttt{g}};
			\node[mynode] (L4a) at (-3.5, 0) {\footnotesize\texttt{a}};
			\node[mynode] (L4b) at (-2.5, 0) {\footnotesize\texttt{b}};
			\node[mynode] (L4c) at (-1.5, 0) {\footnotesize\texttt{c}};
			\node[mynode] (L4d) at (-0.5, 0) {\footnotesize\texttt{d}};
			\node[mynode] (L4e) at (0.5, 0) {\footnotesize\texttt{e}};
			\node[mynode] (L4f) at (1.5, 0) {\footnotesize\texttt{f}};
			\draw[myarrow] (L1) to node[midway,above=0pt, left=7pt]{} (L2a);
			\draw[myarrow] (L1) to node[midway,above=0pt, left=-2.5pt]{} (L2b);
			\draw[myarrow] (L2a) to node[midway,above=0pt, left=4pt]{} (L3a);
			\draw[myarrow] (L2a) to node[midway,above=0pt, right=-1pt]{} (L3b);
			\draw[myarrow] (L2b) to node[midway,above=0pt, left=-0.5pt]{} (L3d);
			\draw[myarrow] (L2b) to node[midway,above=0pt, left=-1pt]{} (L3e);
			\draw[myarrow] (L3a) to node[midway,above=0pt, left=0mm]{} (L4a);
			\draw[myarrow] (L3a) to node[midway,above=0pt, right=-2pt]{} (L4b);
			\draw[myarrow] (L3b) to node[midway,above=0pt, left=0mm]{} (L4c);
			\draw[myarrow] (L3b) to node[midway,above=0pt, right=-2pt]{} (L4d);
			\draw[myarrow] (L3d) to node[midway,above=0pt, left=0mm]{} (L4e);
			\draw[myarrow] (L3d) to node[midway,above=0pt, right=-2pt]{} (L4f);
		\end{tikzpicture}
		\caption{Pairwise summation SD-tree for the set of 7 summands $\{a,b,c,d,e,f,g\}$. This tree has  4 $S$-nodes.}
	\end{subfigure}\hfill%
	\begin{subfigure}[t]{.5\textwidth}
		\begin{tikzpicture}	
			\centering
			% tree 4:8 elements
			\node[mynode,Snode] (L1) at (9, 3) {\footnotesize\texttt{S}};
			\node[mynode,Snode] (L2a) at (7.5, 2) {\footnotesize\texttt{S}};
			\node[mynode,Snode] (L2b) at (10.5, 2) {\footnotesize\texttt{S}};
			\node[mynode,Snode] (L3a) at (6.5, 1) {\footnotesize\texttt{S}};
			\node[mynode,Snode] (L3b) at (8, 1) {\footnotesize\texttt{S}};
			\node[mynode,Snode] (L3d) at (10, 1) {\footnotesize\texttt{S}};
			\node[mynode,Snode] (L3e) at (11.5, 1) {\footnotesize\texttt{S}};
			\node[mynode] (L4a) at (5.5, 0) {\footnotesize\texttt{a}};
			\node[mynode] (L4b) at (6.5, 0) {\footnotesize\texttt{b}};
			\node[mynode] (L4c) at (7.5, 0) {\footnotesize\texttt{c}};
			\node[mynode] (L4d) at (8.5, 0) {\footnotesize\texttt{d}};
			\node[mynode] (L4e) at (9.5, 0) {\footnotesize\texttt{e}};
			\node[mynode] (L4f) at (10.5, 0) {\footnotesize\texttt{f}};
			\node[mynode] (L4g) at (11.5, 0) {\footnotesize\texttt{g}};
			\node[mynode] (L4h) at (12.5, 0) {\footnotesize\texttt{h}};
			\draw[myarrow] (L1) to node[midway,above=0pt, left=7pt]{} (L2a);
			\draw[myarrow] (L1) to node[midway,above=0pt, left=-2.5pt]{} (L2b);
			\draw[myarrow] (L2a) to node[midway,above=0pt, left=4pt]{} (L3a);
			\draw[myarrow] (L2a) to node[midway,above=0pt, right=-1pt]{} (L3b);
			\draw[myarrow] (L2b) to node[midway,above=0pt, left=-0.5pt]{} (L3d);
			\draw[myarrow] (L2b) to node[midway,above=0pt, left=-1pt]{} (L3e);
			\draw[myarrow] (L3a) to node[midway,above=0pt, left=0mm]{} (L4a);
			\draw[myarrow] (L3a) to node[midway,above=0pt, right=-2pt]{} (L4b);
			\draw[myarrow] (L3b) to node[midway,above=0pt, left=0mm]{} (L4c);
			\draw[myarrow] (L3b) to node[midway,above=0pt, right=-2pt]{} (L4d);
			\draw[myarrow] (L3d) to node[midway,above=0pt, left=0mm]{} (L4e);
			\draw[myarrow] (L3d) to node[midway,above=0pt, right=-2pt]{} (L4f);
			\draw[myarrow] (L3e) to node[midway,above=0pt, left=0mm]{} (L4g);
			\draw[myarrow] (L3e) to node[midway,above=0pt, right=-2pt]{} (L4h);
		\end{tikzpicture}
		\caption{Pairwise summation SD-tree for the set of 8 summands $\{a,b,c,d,e,f,g,h\}$. This tree has  7 $S$-nodes.}
	\end{subfigure}%

	\caption{The sub-figures show the changing  shape  of pairwise SD-trees as summands are added, and show  how the internal  nodes change (or not) from S to  D and vice versa. New nodes are added in alternating  fashion, so that leaf descendants of all internal nodes are evenly distributed. The $S$-nodes are grey, and children of $S$-nodes have the same number of leaf descendants.}
	\label{fig_pairwise}		
\end{figure}
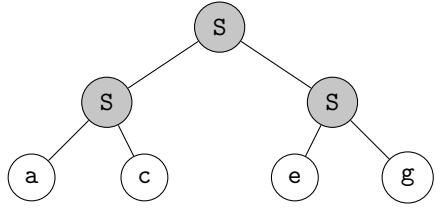
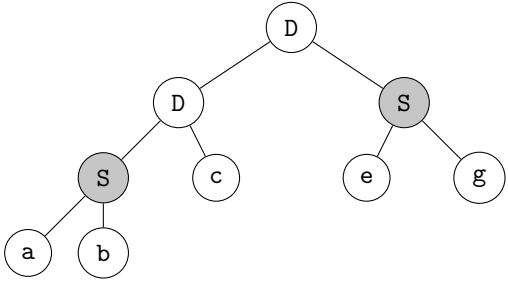
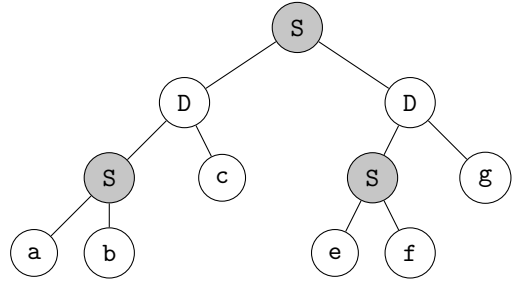
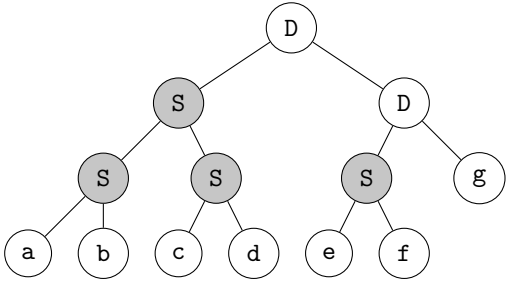
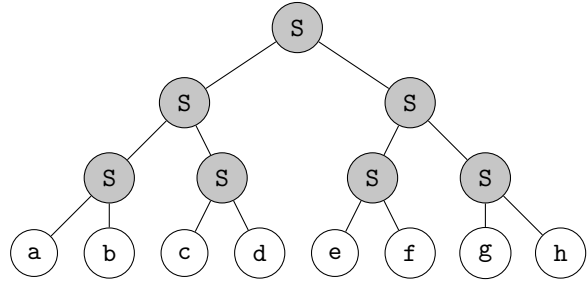

\subsubsection{A formula for pairwise summation using SD-trees.}
We show here a new formula for this sequence, based on the tree representation of a pairwise summation, and the method presented using the SD-tree-based approach from Corollary \ref{equiv_sum}.

Pairwise summation gives rise to a binary tree in which  every node that is not a leaf has exactly two children, and every level is completely filled except for the lowest. This is a \textit{full complete binary tree}, except in the spacing of its lowest level \cite{knuth97art1}. The lowest level is not filled from left to right, but instead the number of descendants of a node differs by at most one from the number of descendants of its siblings, so at every level of the tree, the children of a node are evenly or almost evenly divided between the left and the right branches. Examples are shown in Figure \ref{fig_pairwise}. 
\begin{lemma}\label{pair_iso}
	All pairwise-summation SD-trees on $n$ elements are isomorphic, and therefore have the same number of  $S$-nodes.
\end{lemma}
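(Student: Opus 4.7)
The plan is to proceed by strong induction on $n$, showing that the pairwise-summation procedure deterministically forces the shape of the SD-tree up to the allowed child-reversals that define isomorphism.

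For the base cases, when $n=1$ the tree is a single leaf, and when $n=2$ there is exactly one pairwise SD-tree: an $S$-node with two leaf children. For the inductive step, I would fix $n \geq 3$ and assume the claim holds for all $1 \leq k < n$. By the definition of pairwise summation, the root of any pairwise-summation SD-tree on $n$ leaves corresponds to adding two sub-sums whose leaf counts are $\lceil n/2 \rceil$ and $\lfloor n/2 \rfloor$. These two numbers are uniquely determined by $n$. Each of the two subtrees is itself a pairwise-summation SD-tree on a smaller number of leaves, so by the inductive hypothesis each subtree is unique up to isomorphism. Two pairwise SD-trees on $n$ leaves therefore have subtrees that are isomorphic in some pairing; if the assignment of subtrees to left/right child differs between the two, we simply apply the allowed child-reversal at the root. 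Hence any two pairwise-summation SD-trees on $n$ leaves are isomorphic in the sense of reversing children at some set of nodes. Since Lemma~\ref{iso_SD} tells us isomorphic SD-trees have the same number of $S$-nodes, the second claim follows.

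As a bookkeeping observation that I would record in the argument (and which is useful for the later formula for $\varepsilon$), the root is labeled $S$ exactly when $n$ is even, since only then do $\lceil n/2 \rceil$ and $\lfloor n/2 \rfloor$ coincide. This makes the recursion for the number of $S$-nodes transparent: letting $\varepsilon(n)$ denote the $S$-count of the unique pairwise SD-tree on $n$ leaves, we obtain $\varepsilon(n) = \varepsilon(\lceil n/2 \rceil) + \varepsilon(\lfloor n/2 \rfloor) + [n \text{ even}]$, with $\varepsilon(1) = 0$.

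I do not expect any real obstacle here: the key point is simply that pairwise summation prescribes the split sizes at every recursive step, so there is no freedom in the unlabeled shape beyond the child-reversals already permitted by SD-tree isomorphism. The only mild subtlety is to make sure that matching up subtrees between two candidate trees is consistent with the global isomorphism, which is handled by the observation that a root-level reversal, combined with the inductive isomorphisms on the two subtrees, yields the required sequence of child-reversals for the full tree.
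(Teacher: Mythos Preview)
Your proposal is correct and follows essentially the same approach as the paper: both argue that at every node with $k$ leaf descendants the split into children of sizes $\lceil k/2\rceil$ and $\lfloor k/2\rfloor$ is forced, with the only freedom being the left/right assignment absorbed by the definition of isomorphism. Your version is simply a more explicit inductive packaging of the paper's compressed argument, and your bookkeeping remark about $\varepsilon(n)$ anticipates exactly the recursion the paper proves next in Proposition~\ref{exp_recursive}.
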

\begin{proof}
	A pairwise-summation SD-tree is constructed by extending each node by two children. If a node has   $k$  leaf  descendants,  then its children must have $\floor{\frac{k}{2}}$ and $\ceil{\frac{k}{2}}$ leaf  descendants. These  can be transposed, by pairwise  commutativity of addition. That they have the  same  number of $S$-nodes  follows  by Lemma \ref{iso_SD}.
\end{proof}
\begin{lemma}
	Any summation with a summation tree isomorphic to a pairwise-summation SD-tree is itself a pairwise summation.
\end{lemma}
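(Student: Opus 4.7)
The plan is to proceed by induction on the number of leaves $n$, exploiting the recursive structure of the pairwise SD-tree established in Lemma \ref{pair_iso}. The base cases $n=1$ and $n=2$ are immediate: a single-leaf tree represents a trivial summation, and the unique two-leaf tree represents the unique pairwise summation on two summands.

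For the inductive step, let $\Sigma$ be a summation on $n \geq 3$ variables whose summation tree $T$ is isomorphic to the pairwise SD-tree $P_n$ on $n$ leaves. Because isomorphism, as defined earlier, is generated by reversing the children at some set of nodes, it restricts naturally to maximal proper subtrees: the two subtrees $T_L$ and $T_R$ at the root of $T$ are isomorphic (in some order, depending on whether the root is among the reversed nodes) to the two subtrees $P_L$ and $P_R$ at the root of $P_n$. By construction of the pairwise SD-tree, $P_L$ and $P_R$ have $\floor{n/2}$ and $\ceil{n/2}$ leaves respectively, and each is itself a pairwise SD-tree on its leaf set. Consequently $T_L$ and $T_R$ have leaf counts $\floor{n/2}$ and $\ceil{n/2}$ in some order, and each is isomorphic to the pairwise SD-tree of its respective size.

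By the inductive hypothesis applied to $T_L$ and $T_R$, the sub-summations $\Sigma_L$ and $\Sigma_R$ that they instantiate are pairwise summations on their respective summand subsets. Since $\Sigma$ is obtained from $\Sigma_L$ and $\Sigma_R$ by a single outer addition, and since the induced partition of the $n$ summands into blocks of sizes $\floor{n/2}$ and $\ceil{n/2}$ is exactly the split prescribed by pairwise summation, $\Sigma$ itself satisfies the recursive definition of a pairwise summation on $n$ variables.

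The main obstacle is the bookkeeping around how an isomorphism restricts cleanly to subtrees. This is essentially formal: a sequence of child-reversals in $T$ projects to a sequence of child-reversals in each subtree rooted at a child of the root, so subtree-isomorphism is inherited from tree-isomorphism. The only subtlety is that a reversal at the root of $T$ swaps which of $T_L, T_R$ corresponds to $P_L, P_R$, but this is harmless because the pairwise summation procedure is symmetric between its two recursive halves, so either assignment still yields a valid pairwise decomposition.
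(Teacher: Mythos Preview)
Your proof is correct and is essentially an explicit, inductive unpacking of the paper's one-line argument, which reads in full: ``Follows from the definition of pairwise summation.'' Since pairwise summation is defined recursively by splitting into halves of sizes $\lfloor n/2\rfloor$ and $\lceil n/2\rceil$, and since isomorphism preserves leaf counts and restricts to subtrees, your induction simply spells out why the definition applies---there is no substantive difference in approach.
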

\begin{proof}
	Follows from the definition of pairwise summation.
\end{proof}

\paragraph{Counting $S$-nodes in a pairwise summation.}
We now know that all pairwise summation SD-trees have the same number of $S$ nodes, and all we need do is count them to obtain the exponent in the denominator of the enumeration. Define $\epsilon(n)$ be the number of $S$ nodes in an SD-tree. We show here two different forms for the sequence $\epsilon(n)$. 
\begin{proposition}\label{exp_recursive}
	The number of $S$-nodes in a pairwise summation SD-tree with $n$ leaf nodes is
	\begin{equation}\label{recursive_A268289}
		\varepsilon(n)=
		\begin{cases}
			2\varepsilon(m)+1, & \text{if}\ n=2m; \\
			\varepsilon(m)+\varepsilon(m+1), & \text{if}\ n=2m+1;\\
			0 & \text{if}\ n=1.
		\end{cases}
	\end{equation}
\end{proposition}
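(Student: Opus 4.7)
The plan is to prove the recursion by induction on $n$, analyzing the root of a pairwise-summation SD-tree and using the fact (established in Lemma \ref{pair_iso}) that $\varepsilon(n)$ is well-defined, i.e., all pairwise-summation SD-trees on $n$ leaves have the same $S$-count.

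First, for the base case $n=1$, a single-leaf tree has no internal nodes and hence no $S$-nodes, so $\varepsilon(1)=0$. For the inductive step, I would fix $n\geq 2$ and consider a pairwise summation SD-tree $T$ on $n$ leaves. By the definition of pairwise summation, the root of $T$ splits the leaves into two groups of sizes $\lfloor n/2\rfloor$ and $\lceil n/2\rceil$, and the two subtrees rooted at the children of the root are themselves pairwise summation SD-trees on those numbers of leaves.

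For $n=2m$, both children of the root have exactly $m$ leaf descendants, so the root is an $S$-node (contributing $1$), and each child subtree is a pairwise summation SD-tree on $m$ leaves contributing $\varepsilon(m)$ $S$-nodes by the inductive hypothesis. Summing gives $\varepsilon(2m)=2\varepsilon(m)+1$. For $n=2m+1$, the children of the root have $m$ and $m+1$ leaf descendants respectively, so the root is a $D$-node (contributing $0$), and the two subtrees contribute $\varepsilon(m)$ and $\varepsilon(m+1)$ by the inductive hypothesis, yielding $\varepsilon(2m+1)=\varepsilon(m)+\varepsilon(m+1)$.

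The only subtle point, and really the only thing the proof depends on beyond bookkeeping, is that the subtrees hanging below the root are themselves valid pairwise summation SD-trees on $\lfloor n/2\rfloor$ and $\lceil n/2\rceil$ leaves, so the inductive hypothesis applies to them; this is immediate from the recursive definition of pairwise summation. The well-definedness of $\varepsilon$ on each subtree is guaranteed by Lemma \ref{pair_iso}, so there is no ambiguity in writing $\varepsilon(m)$ and $\varepsilon(m+1)$. I do not anticipate any real obstacle — this is essentially a direct unwinding of the recursive structure of pairwise summation combined with the $S$/$D$ labeling rule applied at the root.
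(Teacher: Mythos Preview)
Your proposal is correct and follows essentially the same approach as the paper's proof: induction on $n$, observing that the two subtrees of the root are themselves pairwise-summation SD-trees, and then splitting into the even and odd cases to determine whether the root is an $S$-node. Your version is slightly more explicit (you state the base case and invoke Lemma~\ref{pair_iso} for well-definedness), but the argument is otherwise identical.
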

\begin{proof}
	All subtrees of a pairwise summation SD-tree are themselves pairwise summation SD-trees. We proceed via induction. 
	
	If $n=2m$, then the pairwise summation divides the $2m$ leaf nodes evenly between the two subtrees, and the root node is an $S$-node. So the number of $S$-nodes in the SD-tree is equal to $1+2\varepsilon(m)$, by the induction hypothesis.
	
	If $n=2m+1$, then the pairwise summation divides the $2m + 1$ leaf nodes into two subtrees having $m$ and $m+1$ nodes, and the root node is not an $S$-node. So the number of $S$-nodes in the SD-tree is equal to $\varepsilon(m)+\varepsilon(m+1)$, by the induction hypothesis.
\end{proof}
\begin{proposition}\label{exp_closed}
	The number of $S$-nodes in a pairwise summation SD-tree with $n$ leaf nodes  is
	\begin{equation}\label{closed_A268289}
		\varepsilon(n)=\sum_{i=0}^{\floor{\log_2(n)}}  \left[ \left(\left(\floor{\frac{n}{2^i}}+1\right)\bmod 2\right)\times 2^i+ (-1)^{\left(\left(\floor{\frac{n}{2^i}}+1\right)\bmod 2\right)}\times (n \bmod  2^i)\right]
	\end{equation}
\end{proposition}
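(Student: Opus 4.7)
The plan is to interpret the sum as a level-by-level count of $S$-nodes in the pairwise-summation SD-tree on $n$ leaves. Placing the root at depth $0$, and writing $q_i = \floor{n/2^i}$ and $r_i = n \bmod 2^i$, the key structural claim is that at depth $i$ the tree contains exactly $r_i$ nodes with $q_i + 1$ leaf descendants and $2^i - r_i$ nodes with $q_i$ leaf descendants, for every $0 \le i \le \floor{\log_2 n}$.

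I would prove this structural claim by induction on $i$, with $i = 0$ as the trivial base case. For the inductive step, I would use the fact (already invoked in the proof of Lemma \ref{pair_iso}) that a pairwise-summation node with $k$ descendants has children with $\floor{k/2}$ and $\ceil{k/2}$ descendants, and do a short case analysis on the parity of $q_i$. A bit of bookkeeping based on the identities $n = 2^i q_i + r_i = 2^{i+1} q_{i+1} + r_{i+1}$ then shows that the descendant counts at depth $i+1$ come out to the advertised values $r_{i+1}$ and $2^{i+1} - r_{i+1}$. This bookkeeping is the main technical obstacle of the proof.

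Once the structural claim is in hand, the enumeration falls out. A node with $k$ descendants is an $S$-node exactly when its two children have equal descendant counts, i.e., when $k$ is even with $k \ge 2$. For $i \le \floor{\log_2 n}$ we have $q_i \ge 1$, so the $S$-nodes at depth $i$ are the $r_i$ nodes with $q_i + 1$ descendants when $q_i$ is odd, or the $2^i - r_i$ nodes with $q_i$ (even, hence at least $2$) descendants when $q_i$ is even. Unpacking the summand, when $q_i$ is odd one has $(q_i + 1) \bmod 2 = 0$ so it evaluates to $r_i$; when $q_i$ is even, $(q_i + 1) \bmod 2 = 1$ so it evaluates to $2^i - r_i$. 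In either case the summand matches the $S$-node count at depth $i$, and for $i > \floor{\log_2 n}$ one has $q_i = 0$ so every node at that depth is a leaf and contributes nothing, which justifies truncating the sum at $\floor{\log_2 n}$. Summing the per-depth counts and invoking Lemma \ref{pair_iso} to confirm that the total is independent of the particular pairwise SD-tree then yields the closed form.
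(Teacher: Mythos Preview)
Your proposal is correct and follows essentially the same approach as the paper: both arguments count $S$-nodes level by level, using the structural fact that at depth $i$ there are $r_i$ nodes with $q_i+1$ leaf descendants and $2^i-r_i$ nodes with $q_i$ leaf descendants, then identify $S$-nodes as those with an even descendant count and match this against the parity-encoded summand. The only difference is that you propose an explicit induction on $i$ to establish the structural claim, whereas the paper simply asserts it; otherwise the two proofs are the same.
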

\begin{remark}
	Equation (\ref{closed_A268289}) is simpler than it looks. $\floor{\log_2(n)}$ is one less than the number of bits in $n$. The expression $(\floor{\frac{n}{2^i}}+1)\bmod 2)$ is just the negation of the $i^\text{th}$ bit of $n$ and enables the functions seen in Equation \ref{eqB} and \ref{eqC} from $\{0,1\}$ to $\{1,0\}$ and $\{-1,1\}$, respectively.
\end{remark}
\begin{proof}
	Because every SD-tree is a full complete binary tree, there are $2^i$ nodes at level  $i$ except for the bottom level. At each level, the descendant leaf nodes are almost evenly divided between the nodes: at the $i^\text{th}$ level, each of  the $2^i$ nodes has at least $\floor{\frac{n}{2^i}}$ leaf descendants, and $(n\bmod 2^i)$ of these nodes have an additional leaf descendant. 
	
	Consider the binary representation of the number $n$ where the bits are ordered from least significant to most significant.  
	If the $i^\text{th}$ bit of  $n$ is $0$, then $\floor{\frac{n}{2^i}}$ is even and $2^i-(n\bmod 2^i)$ nodes have an even number of  descendants. If the $i^\text{th}$ bit of  $n$ is $1$, then $\floor{\frac{n}{2^i}}$ is odd, and $(n\bmod 2^i)$ nodes have an even number of  descendants. A node in a pairwise summation tree is an $S$-node if it has an even number of leaf descendants. So the number of $S$-nodes at level $i$ is
	\begin{equation}\label{eqA}
		\begin{cases}
			2^i-(n\bmod 2^i), & \text{ if the}\ i^\text{th} \text{ bit  of $n$ is }  0;\\
			(n\bmod 2^i), & \text{ if the}\ i^\text{th} \text{ bit of $n$ is }  1.\\
		\end{cases}	  
	\end{equation}  
	We observe the following two equations: 
	\begin{equation}\label{eqB}
		\left(\floor{\frac{n}{2^i}}+1\right)\bmod 2=
		\begin{cases}
			$1$,& \text{ if the}\ i^\text{th} \text{ bit of $n$ is }  0;\\ 
			$0$,& \text{ if the}\ i^\text{th} \text{ bit of $n$ is }  1.\\ 
		\end{cases}
	\end{equation}	
	\begin{equation}\label{eqC}
		(-1)^{\left(\left(\floor{\frac{n}{2^i}}+1\right)\bmod 2\right)}=
		\begin{cases}
			$-1$,& \text{ if the}\ i^\text{th} \text{ bit of $n$ is }  0;\\ 
			$ 1$,& \text{ if the}\ i^\text{th} \text{ bit of $n$ is }  1.\\ 
		\end{cases}
	\end{equation}	
	Putting  Equations (\ref{eqA}), (\ref{eqB}),  and (\ref{eqC}) together, we see that the number of $S$-nodes at level $i$ is
	$$
		\left(\left(\floor{\frac{n}{2^i}}+1\right)\bmod 2\right)
		\times 2^i+ (-1)^{\left(\left(\floor{\frac{n}{2^i}}+1\right)\bmod 2\right)}\times (n \bmod  2^i)
	$$
	Summing across the levels proves the proposition.
\end{proof}
\begin{remark}
	Equation (\ref{closed_A268289}) is easily calculated as bitwise operations. Here is one C encoding:
	\begin{verbatim}
	int bits_n = (int) floor(log2(n))+1;
	int exp = 0;
	for (int i=0; i<bits_n; i++) {
	    exp += (!((n>>i)&1))*(1<<i) + (-(!((n>>i)&1)<<1)+1)*(n-((n>>i)<<i));
	}
	\end{verbatim}
\end{remark}
\begin{remark}
	The exponent $\varepsilon(n)$ in Equations (\ref{recursive_A268289}) and (\ref{closed_A268289}) is an offset variant of the sequence OEIS A268289 \cite{OEISA268289}, the cumulative deficient binary digit sum. In the context of $S$-nodes, $\varepsilon(n)$ is undefined at $n=0$ and starts with $\varepsilon(1)=0$. 
	The SD-tree equation in Proposition \ref{exp_recursive} is the same formula referenced by Sloane for OEIS A268289 \cite{OEISA268289}. The closed form shown in Equation (\ref{closed_A268289}) is new for that sequence. 
\end{remark}
\begin{remark}
    Hwang provides an extensive analysis of solutions to divide-and-conquer recurrences in \cite{hwang17}. Equation (\ref{closed_A268289}) could also be obtained from Equation (\ref{recursive_A268289}) using the methods in that paper. Instead, we prove Equation (\ref{closed_A268289}) directly by analyzing $S$-nodes in Proposition \ref{exp_closed}.
\end{remark}

\paragraph{Counting equivalent pairwise summations.}
The exponent in the  denominator  of the enumeration of the equivalence class of pairwise  summations has been derived, and the enumeration itself immediately follows. 
\begin{corollary}\label{pairrecursivenew}
	The number of computationally inequivalent pairwise summations on $n$ variables is 
	\begin{equation}\label{recursive2_A096351}
    	\sigma(n)=\frac{n!}{2^{\varepsilon(n)}} \text{, where }
    	\varepsilon(n)=
    	\begin{cases}
        	2\varepsilon(m)+1, & \text{if}\ n=2m; \\
        	\varepsilon(m)+\varepsilon(m+1), & \text{if}\ n=2m+1.\\
        	0, & \text{if}\ n=1.\\
    	\end{cases}
	\end{equation}
\end{corollary}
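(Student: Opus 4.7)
The plan is to assemble the corollary directly from the three-step general method laid out in Section \ref{method}, since every ingredient has already been established in the preceding lemmas and propositions. Specifically, Corollary \ref{equiv_sum} tells us that for any SD-tree $T$ on $n$ leaves, the number of computationally inequivalent leaf-labelings is $n!/2^{\varepsilon}$, where $\varepsilon$ is the number of $S$-nodes of $T$. So once we verify that (1) the class of pairwise summations is represented by a single SD-tree up to isomorphism, (2) every summation with that tree shape is in the class, and (3) $\varepsilon$ for this tree shape is exactly the $\varepsilon(n)$ of Proposition \ref{exp_recursive}, the corollary follows immediately.

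First I would invoke Lemma \ref{pair_iso}, which already shows that all pairwise-summation SD-trees on $n$ elements are isomorphic; by Lemma \ref{iso_SD}, any two such trees therefore have the same number of $S$-nodes. Next I would invoke the subsequent lemma, which states that any summation whose SD-tree is isomorphic to a pairwise-summation SD-tree is itself a pairwise summation. Together these two statements establish that the equivalence class of pairwise summations on $n$ variables corresponds, up to SD-tree isomorphism, to exactly one SD-tree $T_n$, and this tree has a well-defined number of $S$-nodes.

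Then I would apply Corollary \ref{equiv_sum} to the tree $T_n$, concluding that the number of computationally inequivalent pairwise summations on $n$ variables is $n!/2^{\varepsilon(T_n)}$. Finally, Proposition \ref{exp_recursive} identifies $\varepsilon(T_n)$ with the function $\varepsilon(n)$ defined by the stated recursion (with base case $\varepsilon(1)=0$, $\varepsilon(2m)=2\varepsilon(m)+1$, and $\varepsilon(2m+1)=\varepsilon(m)+\varepsilon(m+1)$). Substituting this into the formula yields the claimed closed expression $\sigma(n)=n!/2^{\varepsilon(n)}$.

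Because every non-trivial step has already been proved in the earlier lemmas and propositions, there is no real obstacle here; the entire content of the corollary is bookkeeping, specifically checking that the hypotheses of Corollary \ref{equiv_sum} apply to the (unique up to isomorphism) pairwise SD-tree on $n$ leaves, and then substituting the recursive expression for $\varepsilon(n)$. The only thing worth being careful about is stating the base case $\varepsilon(1)=0$ explicitly, so that the recursion is well-posed for both odd and even $n$ as one unwinds it down to single-leaf subtrees.
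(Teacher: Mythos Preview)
Your proposal is correct and follows essentially the same approach as the paper, which simply states that the result ``follows directly from Corollary~\ref{equiv_sum} and Proposition~\ref{exp_recursive}.'' You have unpacked the implicit appeal to Lemmas~\ref{pair_iso} and the subsequent lemma more explicitly than the paper does, but the logical content is identical.
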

\begin{proof}
	Follows directly  from  Corollary \ref{equiv_sum} and Proposition \ref{exp_recursive}.
\end{proof}
\begin{corollary}\label{pairclosednew}
	The number of computationally inequivalent pairwise summations on $n$ variables is 
	\begin{equation}\label{closed_A096351}
	\sigma(n)=\frac{n!}{2^{\varepsilon(n)}} \text{, where }
    	\varepsilon(n)=\sum_{i=0}^{\floor{\log_2(n)}}  \left(\left(\floor{\frac{n}{2^i}}+1\right)\bmod 2\right)\cdot 2^i+ (-1)^{\left(\left(\floor{\frac{n}{2^i}}+1\right)\bmod 2\right)}\cdot (n \bmod  2^i)
	\end{equation}
\end{corollary}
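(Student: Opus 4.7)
The plan is to assemble this corollary directly from two ingredients already in hand: Corollary \ref{equiv_sum}, which gives the $n!/2^{\varepsilon}$ formula for the number of inequivalent labellings of any fixed SD-tree with $n$ leaves and $\varepsilon$ many $S$-nodes; and Proposition \ref{exp_closed}, which supplies the closed-form expression for $\varepsilon(n)$ in the pairwise case. The whole proof is essentially a substitution of the latter into the former, with a brief justification that the two $\varepsilon(n)$'s really refer to the same quantity.

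First I would invoke Lemma \ref{pair_iso} to note that all pairwise summation SD-trees on $n$ leaves are isomorphic, so the $S$-node count is a well-defined function of $n$ alone; call it $\varepsilon(n)$. The companion lemma stating that every summation whose SD-tree is isomorphic to a pairwise SD-tree is itself pairwise ensures that the equivalence class of pairwise summations on $n$ variables is precisely the class of leaf-labelled trees isomorphic to the canonical pairwise SD-tree on $n$ leaves. This is exactly the hypothesis required to apply Corollary \ref{equiv_sum}, which therefore yields $\sigma(n)=n!/2^{\varepsilon(n)}$.

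Next I would plug the closed-form expression for $\varepsilon(n)$ established in Proposition \ref{exp_closed} into the denominator. This produces the displayed identity verbatim. Since Proposition \ref{exp_closed} already proves the closed form from first principles (counting $S$-nodes level-by-level using the binary expansion of $n$), no further work on $\varepsilon(n)$ is needed here.

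There is no real obstacle: the counting argument lives in Corollary \ref{equiv_sum} and the arithmetic identity lives in Proposition \ref{exp_closed}, and this corollary is a one-line consequence of combining them. The only bookkeeping point worth flagging explicitly is that the symbol $\varepsilon(n)$ in Corollary \ref{equiv_sum} was defined relative to a specific representative tree $T$, while here it is used as a function of $n$; Lemma \ref{pair_iso} is precisely what licenses this elision.
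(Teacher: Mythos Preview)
Your proposal is correct and follows essentially the same approach as the paper: the paper's proof is literally the one line ``Follows directly from Corollary \ref{equiv_sum} and Proposition \ref{exp_closed},'' and you have simply unpacked this by making explicit the role of Lemma \ref{pair_iso} (and its companion) in justifying that Corollary \ref{equiv_sum} applies with a well-defined $\varepsilon(n)$. Your version is more detailed but not different in substance.
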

\begin{proof}
	Follows directly from Corollary \ref{equiv_sum} and Proposition \ref{exp_closed}.
\end{proof}
\begin{remark}
	Equation (\ref{recursive2_A096351})  is a new recursive formula for OEIS A096351  \cite{OEISA096351} and Equation (\ref{closed_A096351}) is a new closed formula for the same.
\end{remark}

\subsubsection{Pairwise summations, Karatsuba recursion and the Tagaki function.} There are interesting connections between $S$-nodes in a pairwise summation, the Tagaki function   \cite{takagi01} and Karatsuba's classical addition algorithm  \cite{karatsuba62}. OEIS A268289, i.e., the number of $S$-nodes in a pairwise summation, is related to the Tagaki function; there are a number of identities linking the two \cite{ baruchel19_2, lagarias11}.
Baruchel discusses the recursion tree of Karatsuba's multiplication algorithm in terms of two types of nodes in the tree, direct and indirect. He conjectured in \cite{baruchel19} and proved in \cite{baruchel19_2} that the sequence $\varepsilon(n+1)$ in Equation (\ref{baruchel_set}), the set of elements in the indirect path that  contribute to a term of degree $n$ in the Karatsuba addition, is A$268289_{n+1}$.  

We provide a direct proof of Baruchel's identity on $\varepsilon(n)$ (A268289) here, but in terms of the number of $S$-nodes of pairwise summations (or full complete binary trees with the lowest levels spaced as in a pairwise summation), rather than the Karatsuba recursion tree.

To prove this, we consider an $n$-leafed pairwise-summation tree $T$, and label the $(n-1)$ internal nodes of $T$ breadth-first with the integers from $1$ to $(n-1)$. The tree $T$ has two subtrees: the left $T_L$ and the right $T_R$. The internal nodes of these could be labeled in the same manner as those of $T$.
 Without loss of generality, if $n=2m+1$, $T_L$ has the greater number ($m+1$) of leaf nodes and $T_R$ has $m$ leaf nodes. If $n=2m$, then both $T_L$ and $T_R$ have $m$ leaf nodes. This  ordering of subtrees is key to the mapping from $T_L$ and $T_R$ to $T$ that we define for the proof.

 We  define $f$, mapping the internal nodes of $T_L$ and $T_R$ into the internal nodes of $T$:
\begin{equation*}
    f(x) =
    \begin{cases}
        2x,  & \text{ if } x \in T_L; \\
        2x+1,  & \text{ if } x \in T_R. \\
    \end{cases}
\end{equation*}
The function $f$ is a bijection between the internal nodes of $T_L \cup T_R$ and the non-root internal nodes of $T$. This mapping is illustrated in Figure \ref{fig_karatsuba}.

\begin{figure}[h!tb]
	
	\tikzstyle{mynode}=[circle, draw]
	\tikzstyle{rectanglenode}=[rectangle, draw]
	\tikzstyle{Snode}=[fill=gray!50]
	\tikzstyle{sqSnode}=[fill=gray!20]
	\tikzstyle{leaf}=[fill=gray!45]
	\tikzstyle{indirect}=[pattern=north east lines, pattern color=gray!80]
	\tikzstyle{myarrow}=[]
	\centering
	\begin{subfigure}[t]{.45\textwidth}
		\begin{tikzpicture}	
			\centering
		 	%tree 3: 7 elements
			%\node[mynode, gray] (L1) at (0, 3) {\footnotesize\texttt{}};
			\node[mynode,Snode] (L2a) at (-1.5, 2) {\footnotesize\texttt{1}};
			\node[mynode] (L2b) at (1.5, 2) {\footnotesize\texttt{1}};
			\node[mynode,Snode] (L3a) at (-2.5, 1) {\footnotesize\texttt{2}};	
			\node[mynode,Snode] (L3b) at (-1, 1) {\footnotesize\texttt{3}};
			\node[mynode,sqSnode, rectangle, minimum height = .6cm, minimum width = .6cm] (L3d) at (1, 1) {\footnotesize\texttt{2}};
			\node[mynode] (L3e) at (2.5, 1) {\footnotesize\texttt{}};
			\node[mynode] (L4a) at (-3.5, 0) {\footnotesize\texttt{}};
			\node[mynode] (L4b) at (-2.5, 0) {\footnotesize\texttt{}};
			\node[mynode] (L4c) at (-1.5, 0) {\footnotesize\texttt{}};
			\node[mynode] (L4d) at (-0.5, 0) {\footnotesize\texttt{}};
			\node[mynode] (L4e) at (0.5, 0) {\footnotesize\texttt{}};
			\node[mynode] (L4f) at (1.5, 0) {\footnotesize\texttt{}};
			\draw[myarrow] (L2a) to node[midway,above=0pt, left=4pt]{} (L3a);
			\draw[myarrow] (L2a) to node[midway,above=0pt, right=-1pt]{} (L3b);
			\draw[myarrow] (L2b) to node[midway,above=0pt, left=-0.5pt]{} (L3d);
			\draw[myarrow] (L2b) to node[midway,above=0pt, left=-1pt]{} (L3e);
			\draw[myarrow] (L3a) to node[midway,above=0pt, left=0mm]{} (L4a);
			\draw[myarrow] (L3a) to node[midway,above=0pt, right=-2pt]{} (L4b);
			\draw[myarrow] (L3b) to node[midway,above=0pt, left=0mm]{} (L4c);
			\draw[myarrow] (L3b) to node[midway,above=0pt, right=-2pt]{} (L4d);
			\draw[myarrow] (L3d) to node[midway,above=0pt, left=0mm]{} (L4e);
			\draw[myarrow] (L3d) to node[midway,above=0pt, right=-2pt]{} (L4f);
		\end{tikzpicture}
		\caption{Two pairwise summation SD-trees, the left, $T_L$, with 4 leaves and the right, $T_R$, with 3. Internal nodes are numbered breadth-first, and the $S$-nodes are grey.}
	\end{subfigure}\hfill%
	\begin{subfigure}[t]{.5\textwidth}
		\begin{tikzpicture}	
			\centering
			% tree 4:8 elements
			\node[mynode] (L1) at (9, 3) {\footnotesize\texttt{1}};
			\node[mynode,Snode] (L2a) at (7.5, 2) {\footnotesize\texttt{2}};
			\node[mynode] (L2b) at (10.5, 2) {\footnotesize\texttt{3}};
			\node[mynode,Snode] (L3a) at (6.5, 1) {\footnotesize\texttt{4}};
			\node[mynode,sqSnode, rectangle, minimum height = .6cm, minimum width = .6cm] (L3b) at (8, 1) {\footnotesize\texttt{5}};
			\node[mynode,Snode] (L3d) at (10, 1) {\footnotesize\texttt{6}};
			\node[mynode] (L3e) at (11.5, 1) {\footnotesize\texttt{}};
			\node[mynode] (L4a) at (5.5, 0) {\footnotesize\texttt{}};
			\node[mynode] (L4b) at (6.5, 0) {\footnotesize\texttt{}};
			\node[mynode] (L4c) at (7.5, 0) {\footnotesize\texttt{}};
			\node[mynode] (L4d) at (8.5, 0) {\footnotesize\texttt{}};
			\node[mynode] (L4e) at (9.5, 0) {\footnotesize\texttt{}};
			\node[mynode] (L4f) at (10.5, 0) {\footnotesize\texttt{}};
			\draw[myarrow] (L1) to node[midway,above=0pt, left=7pt]{} (L2a);
			\draw[myarrow] (L1) to node[midway,above=0pt, left=-2.5pt]{} (L2b);
			\draw[myarrow] (L2a) to node[midway,above=0pt, left=4pt]{} (L3a);
			\draw[myarrow] (L2a) to node[midway,above=0pt, right=-1pt]{} (L3b);
			\draw[myarrow] (L2b) to node[midway,above=0pt, left=-0.5pt]{} (L3d);
			\draw[myarrow] (L2b) to node[midway,above=0pt, left=-1pt]{} (L3e);
			\draw[myarrow] (L3a) to node[midway,above=0pt, left=0mm]{} (L4a);
			\draw[myarrow] (L3a) to node[midway,above=0pt, right=-2pt]{} (L4b);
			\draw[myarrow] (L3b) to node[midway,above=0pt, left=0mm]{} (L4c);
			\draw[myarrow] (L3b) to node[midway,above=0pt, right=-2pt]{} (L4d);
			\draw[myarrow] (L3d) to node[midway,above=0pt, left=0mm]{} (L4e);
			\draw[myarrow] (L3d) to node[midway,above=0pt, right=-2pt]{} (L4f);
		\end{tikzpicture}
		\caption{A pairwise summation SD-tree $T$ with $7$ leaves. The left subtree is  isomorphic to  $T_L$, and  the right  subtree is isomorphic to  $T_R$. Internal nodes are numbered breadth-first, and $S$-nodes are grey. }
	\end{subfigure}%
	\caption{The mapping  $f$  is illustrated here. Internal nodes $x$ from $T_L$ are sent to $(2x)$, so $1$, $2$, and $3$ from $T_L$ (circles) are mapped to internal  nodes $2$, $4$, and $6$ of tree $T$. Internal nodes $x$ from $T_R$ are sent to $(2x+1)$, so internal node $2$ of  $T_R$ (square) is mapped  to internal  node $5$ of  $T$, and internal node $1$ of $T_R$ is mapped to node $3$ of $T$.}
	\label{fig_karatsuba}		
\end{figure}
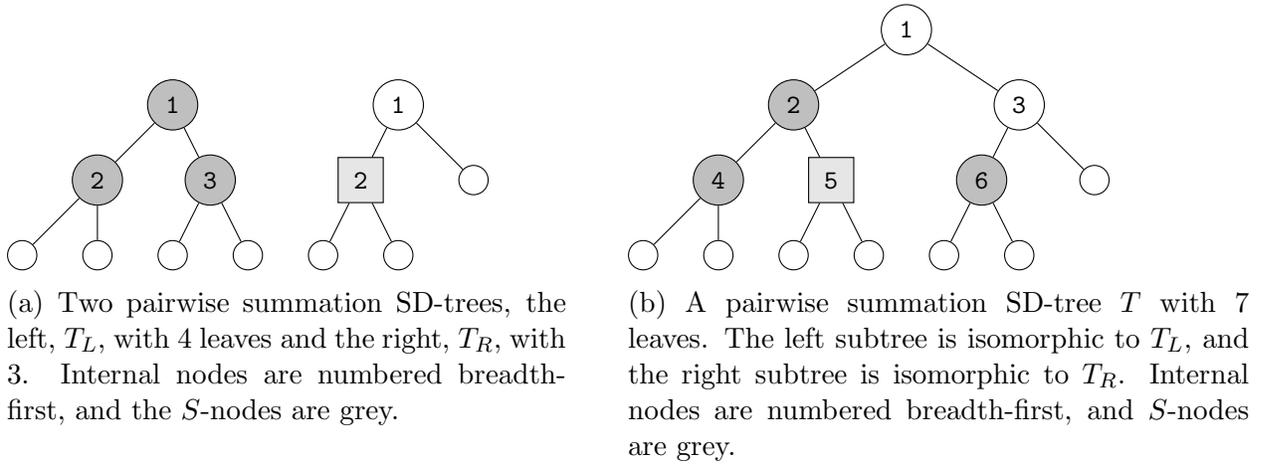

\begin{proposition}\label{baruchel_conjecture}
    Let $\varepsilon(n)$ be the number of $S$-nodes in a pairwise summation SD-tree. Then
	\begin{equation}\label{baruchel_set}
    	\varepsilon(n)=| \{k \mid 1 \le k < n, (n-k-1) \bmod 2^{\floor{\log_2 k}+1} < 2^{\floor{\log_2 k}} \}|
	\end{equation}
\end{proposition}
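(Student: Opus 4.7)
The plan is to derive a clean closed form for the number of leaves beneath each internal node of $T$ and then observe that the $S$-node condition becomes exactly the bit-extraction condition in the proposition.

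Under the recursive labeling defined by $f$, a quick induction shows that the internal node labels of $T$ are exactly $\{1,\dots,n-1\}$: the root contributes $1$, while $T_L$'s internal labels $1,\dots,\ceil{n/2}-1$ map to the even labels $2,4,\dots,2\ceil{n/2}-2$ of $T$ and $T_R$'s internal labels $1,\dots,\floor{n/2}-1$ map to the odd labels $3,5,\dots,2\floor{n/2}-1$. Let $L(k,n)$ denote the number of leaves in the subtree of $T$ rooted at the internal node labeled $k$. The central technical step is to prove, by induction on $j := \floor{\log_2 k}$, that
\[
    L(k,n) \;=\; \floor{\frac{n-k-1}{2^{j}}} + 2.
\]
The base case $j=0$ forces $k=1$ and gives $L(1,n)=n$, as needed. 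For the inductive step, I would use that when $k=2x$ the subtree of $T$ rooted at label $k$ is identically the subtree of $T_L$ rooted at label $x$, so $L(2x,n) = L(x,\ceil{n/2})$; symmetrically, $L(2x+1,n) = L(x,\floor{n/2})$. Applying the induction hypothesis, each case reduces to the elementary identity $\floor{(2a+\delta)/2^{j}} = \floor{a/2^{j-1}}$, valid for integer $a$, $j\ge 1$, and $\delta\in\{0,1\}$; a short case check on the parity of $n$ closes each sub-case.

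With the closed form for $L(k,n)$ in hand, the proposition is immediate. Node $k$ is an $S$-node exactly when $L(k,n)$ is even, equivalently when $\floor{(n-k-1)/2^{j}}$ is even, equivalently when the $j$-th bit of $n-k-1$ is zero, i.e., when $(n-k-1)\bmod 2^{j+1} < 2^j$. Summing this indicator over $k\in\{1,\dots,n-1\}$ yields $\varepsilon(n)$ on the left and the cardinality of the set in (\ref{baruchel_set}) on the right.

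The main obstacle will be the parity bookkeeping in the inductive step, where four sub-cases arise from $k$ being even or odd together with $n$ being even or odd. In each sub-case one must check that $(n-k-1)$ halves correctly down to either $\ceil{n/2}-x-1$ or $\floor{n/2}-x-1$ after dividing by $2$ and discarding the remainder; the rest of the argument is routine manipulation of floor functions and binary representations.
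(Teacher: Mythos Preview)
Your argument is correct, and it follows a genuinely different route from the paper's. Both proofs rest on the same bijection $f$ between internal nodes of $T_L\cup T_R$ and the non-root internal nodes of $T$, and both ultimately split into the same four parity sub-cases. But the paper never computes $L(k,n)$: it inducts on $n$ (equivalently on tree height), directly manipulates the modular inequality to show that $x$ satisfies the condition in $T_L$ (resp.\ $T_R$) if and only if $f(x)$ satisfies it in $T$, and then closes by invoking the recursion $\varepsilon(2m)=2\varepsilon(m)+1$, $\varepsilon(2m{+}1)=\varepsilon(m)+\varepsilon(m{+}1)$ from Proposition~\ref{exp_recursive}. Your proof instead isolates a stronger intermediate lemma, the closed form $L(k,n)=\floor{(n-k-1)/2^{j}}+2$ with $j=\floor{\log_2 k}$, proved by induction on $j$ with $n$ universally quantified; from this the proposition is a one-line bit-extraction observation, with no need to cite Proposition~\ref{exp_recursive}. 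The payoff of your approach is an explicit leaf-count formula for every internal node of the pairwise tree, which explains \emph{why} the condition in (\ref{baruchel_set}) has its particular shape; the paper's approach is more self-contained within the recursive framework already set up and avoids introducing a new auxiliary quantity.
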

\begin{proof}
	It suffices  to  show that the $S$-nodes of $T$ are  exactly those  whose labels are in the set described by Equation (\ref{baruchel_set}). 
	The proof proceeds by induction on tree height. We consider the two cases: $n=2m+1$  ($n$ odd) and $n=2m$ ($n$ even). 
	
	First,  the tree root satisfies condition \ref{baruchel_set} if and only if $n$ is even: the  root is  labeled  $1$, so $(n-1-1) \bmod 2^{\floor{\log_2 1}+1} < 2^{\floor{\log_2 1}}$ if and only if $(n-1-1)=0\bmod 2$ if and only if $n$ is even. 

	\emph{Case  1:} $n$ odd, $n=2m+1$. 
	By the inductive hypothesis, $x$ is an $S$-node of $T_L$ if and only if   
	\begin{equation*}
		((m+1)-x-1) \bmod 2^{\floor{\log_2(x)}+1}  < 2^{\floor{\log_2(x)}} \text{, which is true if and only if } 
	\end{equation*}
	\begin{equation*}
		(2(m+1)-2x-2) \bmod 2^{\floor{\log_2(x)}+2}  < 2^{\floor{\log_2(x)}+1} \text{, which is the same  as}
	\end{equation*}
	\begin{equation*}
		(n-2x-1) \bmod 2^{\floor{\log_2(2x)}+1}  < 2^{\floor{\log_2(2x)}} \text{, which is true if and only if}
	\end{equation*}
	\begin{equation}\label{eqEven}
		(n-k-1) \bmod 2^{\floor{\log_2(k)}+1}  < 2^{\floor{\log_2(k)}} \text{, where  $k$ is  an even node of $T$.}
	\end{equation}
	Again, by the induction hypothesis, $x$ is an $S$-node of $T_R$  if and only if
	\begin{equation*}
		(m-x-1) \bmod 2^{\floor{\log_2(x)}+1}  < 2^{\floor{\log_2(x)}} \text{, which is true if and only if}
	\end{equation*}
	\begin{equation*}
		(2m-2x-2) \bmod 2^{\floor{\log_2(2x+1)}+1}  < 2^{\floor{\log_2(x+1)}} \text{, which is the same  as}
	\end{equation*}
	\begin{equation*}
		((n-1)-(2x+1)-1) \bmod 2^{\floor{\log_2(2x+1)}+1}  < 2^{\floor{\log_2(x+1)}} \text{, which is true if and only if}
	\end{equation*}
	\begin{equation}\label{eqOdd}
		(n-k-1) \bmod 2^{ \floor{\log_2(k)} +1}  < 2^{\floor{\log_2(k)}} \text{, where $k$ is an odd node of $T$, and } k\ge 3.
	\end{equation}
	Statement (\ref{eqEven}) shows there is a bijection between even nodes of $T$ that satisfy the condition in the proposition and the $S$-nodes of $T_L$, and statement (\ref{eqOdd}) shows a bijection between odd non-root nodes of $T$ that satisfy the condition and the $S$-nodes of $T_R$.	Since $n$ is odd, the root node does not satisfy the condition. So the number of nodes in $T$ that satisfy the condition in Equation (\ref{baruchel_set}) is $\varepsilon(m+1) + \varepsilon(m)$,  and this is   $\varepsilon(n)$, by Proposition \ref{exp_recursive}.

	\emph{Case  2:} $n$ even, $n=2m$. 	
	The proof in this case proceeds in the same manner as the above. By the induction hypothesis, $x$ is an $S$-node of $T_L$ if and only if   
    \begin{equation*}
    	(m-x-1) \bmod 2^{\floor{\log_2(x)}+1}  < 2^{\floor{\log_2(x)}} \text{, which is true if and only if }
    \end{equation*}
    \begin{equation*}
    	(2m-2x-2) \bmod 2^{\floor{\log_2(x)}+2}  < 2^{\floor{\log_2(x)}+1} \text{, which is the same  as}
    \end{equation*}
    \begin{equation*}
    	((n-1)-2x-1) \bmod 2^{\floor{\log_2(2x+1)}+1}  < 2^{\floor{\log_2(2x+1})} \text{, which is true if and only if}
    \end{equation*}
    \begin{equation}\label{eqEven2}
        (n-k-1) \bmod 2^{\floor{\log_2(k)}+1}  < 2^{\floor{\log_2(k)}} \text{, where  $k$ is  an even node of $T$.} 
    \end{equation}
    Again, by the induction hypothesis, $x$ is an $S$-node of $T_R$  if and only if
    \begin{equation*}
    	(m-x-1) \bmod 2^{\floor{\log_2(x)}+1}  < 2^{\floor{\log_2(x)}} \text{, which is true if and only if}
    \end{equation*}
    \begin{equation*}
    	(2m-2x-2) \bmod 2^{\floor{\log_2(x)}+2}  < 2^{\floor{\log_2(x)}+1} \text{, which is true if and only if}
    \end{equation*}
    \begin{equation*}
    	(n-(2x+1)-1) \bmod 2^{\floor{\log_2(2x+1)}+1}  < 2^{\floor{\log_2(2x+1)}} \text{, which is true if and only if}
    \end{equation*}
    \begin{equation}\label{eqOdd2}
    	(n-k-1) \bmod 2^{\floor{\log_2(k)}+1}  < 2^{\floor{\log_2(k)}} \text{, where $k$ is an odd node of $T$, and } k\ge 3.
    \end{equation}
	 Statement (\ref{eqEven2}) shows there is a bijection between even nodes of $T$ that satisfy the condition in the proposition and the $S$-nodes of $T_L$, and (\ref{eqOdd2}) shows a bijection between odd non-root nodes of $T$ that satisfy the condition and the $S$-nodes of $T_R$.	The root node does satisfy the condition when $n$ is even. So the number of nodes in $T$ that satisfy the condition in Equation (\ref{baruchel_set}) is $2\varepsilon(m) + 1$,  and this is   $\varepsilon(n)$, by Proposition \ref{exp_recursive}.
\end{proof}

\subsubsection{A remark on orderings for pairwise summations vs. tournaments  }
The parenthetic form of grouping for a pairwise summation is the same as the form for a tournament. However, the optimal ordering for a summation may well be different than that of a tournament. This contrast illustrates the problem-specific nature of selecting the preferred or best ordering, even after a grouping method has been selected.

As a heuristic,  the most accurate pairwise summation groups pairs of similar magnitude at each level of the summation tree. This tends to minimize rounding error that results from the use of finite precision on digital computers \cite{job20}.

The best tournament might also match teams of equal strength at each level of the tournament, leading to interesting games, at least in early games. This is equivalent to the best ordering for a pairwise summation. On the other hand, one might prefer better teams to be matched with worse teams early in the competition, in order to allow the best teams to persevere into the final rounds of the tournament. This would be a very poor choice for a pairwise summation, and is more likely to lead to rounding errors and inaccuracy. 
The choice of ordering here then would depend on the goals of the tournament.

\section{Bounds and enumerations}

\subsection{Bounds  on inequivalent summations on a parenthetic form}
\begin{proposition}\label{upper_bound}
	The upper bound for the number of computationally inequivalent summations on $n$ summands on a class of isomorphic SD-trees is $\frac{n!}{2}$. This is the number of ladder summations on $n$ summands. 
\end{proposition}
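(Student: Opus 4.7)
The plan is to combine Corollary \ref{equiv_sum} with a lower bound on the number of $S$-nodes in any SD-tree on $n \geq 2$ leaves. By Corollary \ref{equiv_sum}, the count of computationally inequivalent summations on a class isomorphic to an SD-tree $T$ with $\varepsilon$ $S$-nodes is $\frac{n!}{2^\varepsilon}$. Since $n!$ is fixed, maximizing this quantity over the choice of SD-tree amounts to \emph{minimizing} $\varepsilon$. So the proof reduces to showing that $\varepsilon \geq 1$ whenever $n \geq 2$, and that this lower bound is attained.

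First I would argue that every SD-tree with at least two leaves contains at least one $S$-node. The idea is to look at a deepest internal node of $T$. Because $T$ is a full binary tree on $n \geq 2$ leaves, such a node exists, and both of its children must be leaves (otherwise a strictly deeper internal node would exist, contradicting the choice). Its two subtrees are then single leaves, each with exactly one leaf descendant, so $|d_\ell| = |d_r| = 1$ and the node is an $S$-node by definition. Hence $\varepsilon(T) \geq 1$, giving the upper bound $\frac{n!}{2^\varepsilon} \leq \frac{n!}{2}$.

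Next I would invoke the earlier Lemma stating that the ladder SD-tree on $n$ leaves has exactly one $S$-node, together with Proposition \ref{numladder}, which already identifies $\frac{n!}{2}$ as the number of computationally inequivalent ladder summations on $n$ variables. This shows the bound is tight and is realized precisely by the ladder class, completing the proof.

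The argument is essentially a one-line optimization once the structural lemma on $S$-nodes is in place; the only step requiring care is the claim that a deepest internal node must have two leaf children, but this is immediate from $T$ being a full binary tree. So I do not anticipate a significant obstacle here; the main content of the proposition is packaging the previous results (Corollary \ref{equiv_sum}, Corollary \ref{laddercor}, and Proposition \ref{numladder}) into the extremal statement.
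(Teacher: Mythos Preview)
Your proof is correct and follows essentially the same approach as the paper: both use Corollary \ref{equiv_sum} to reduce to showing $\varepsilon \geq 1$, establish this by observing that an internal node at maximal depth necessarily has two leaf children (the paper phrases this as ``leaf nodes at the lowest level must come in pairs''), and then invoke the ladder tree to witness tightness. Your write-up is slightly more explicit about the deepest-node argument, but the substance is identical.
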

\begin{proof}
	This can be seen by considering that all nodes in a summation tree must have zero or two children, so leaf nodes at the lowest level must come in pairs, and any tree must have at least one of these pairs. This means an upper bound is less than or equal to $\frac{n!}{2}$, by Corollary \ref{equiv_sum}. But this bound is met by the class of ladder summations on $n$ terms. 
\end{proof}
\begin{remark}  
	This upper bound is OEIS A001710 \cite{OEISA001710}, the number of even permutations on $n$ letters. See Proposition \ref{numladder}.
\end{remark}
\begin{proposition}\label{lower_bound}
	 The lower bound for the number of summations on $n$ summands on a class of isomorphic SD-trees is
	 $\frac{n!}{2^{\beta(n)}}$, where $2^{\beta(n)}$ is the highest  power of  $2$  that  divides $n!$. 
\end{proposition}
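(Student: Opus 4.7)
The plan is to derive the bound directly from the enumeration formula established in Corollary \ref{equiv_sum}, combined with the elementary observation that a count of summations must be a positive integer.

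First I would invoke Corollary \ref{equiv_sum} to write the number of computationally inequivalent summations on any isomorphism class of SD-trees with $n$ leaves as $\frac{n!}{2^{\varepsilon}}$, where $\varepsilon$ is the number of $S$-nodes in any representative tree. This reduces the problem to bounding $\varepsilon$ from above across all possible SD-trees on $n$ leaves.

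Next I would argue by integrality. Since $\frac{n!}{2^{\varepsilon}}$ counts actual objects, it is a positive integer, so $2^{\varepsilon}$ must divide $n!$. By the definition of $\beta(n)$ as the exponent of the highest power of $2$ dividing $n!$, this forces $\varepsilon \le \beta(n)$. Monotonicity of $x \mapsto 1/2^{x}$ then yields $\frac{n!}{2^{\varepsilon}} \ge \frac{n!}{2^{\beta(n)}}$, which is the asserted lower bound.

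The only content-bearing step is the integrality observation; the rest is bookkeeping, so I do not expect a substantive obstacle. As a sanity check, I would note that the bound is tight whenever $n$ is a power of $2$: in that case the pairwise summation SD-tree is a perfect binary tree, every one of its $n-1$ internal nodes is an $S$-node, and by Legendre's formula $\beta(2^{k})=2^{k}-1$, so pairwise summation realizes the bound exactly. For general $n$ the bound need not be achieved, but the proposition only claims it as a lower bound, which the integrality argument already establishes.
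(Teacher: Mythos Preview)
Your proof is correct and matches the paper's approach exactly: the paper's proof is the single line ``Follows from Corollary~\ref{equiv_sum},'' with the integrality step left implicit, and you have simply made that step explicit. One small correction to your closing remark: the bound \emph{is} in fact achieved for every $n$, not just powers of $2$---Proposition~\ref{lowerboundmet} constructs a parenthetic form meeting it by decomposing $n$ along its binary expansion---so your aside that ``for general $n$ the bound need not be achieved'' is mistaken, though this does not affect the validity of the lower-bound argument itself.
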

\begin{proof}
	Follows from Corollary \ref{equiv_sum}.
\end{proof}
\begin{remark}  
The sequence $\beta(n)$ is OEIS A011371 \cite{OEISA011371}, the highest power of $2$ dividing $n!$. The sequence $\frac{n!}{2^{\beta(n)}}$ is OEIS A049606 \cite{OEISA049606}, the largest odd divisor of $n!$.
\end{remark}
\begin{lemma}\label{pow2_2k}
	The largest power of $2$ that divides $2^k!$ is  $2^{(2^k-1)}$.
\end{lemma}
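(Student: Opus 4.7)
The plan is to apply Legendre's formula (or equivalently, the well-known identity $\beta(n) = n - s_2(n)$, where $s_2(n)$ is the number of $1$'s in the binary expansion of $n$). Either gives the result in a single line, since $s_2(2^k) = 1$ yields $\beta(2^k) = 2^k - 1$. However, since the paper is aimed at a combinatorial audience working with explicit binary structure, I would prefer a self-contained induction on $k$ that mirrors the recursive flavor of the pairwise-summation arguments already in the paper.

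First I would establish the base case: $\beta(2^0) = \beta(1) = 0 = 2^0 - 1$, since $1!$ has no factor of $2$. Then the key inductive step uses the elementary observation that in $(2n)! = 1 \cdot 2 \cdot 3 \cdots (2n)$, the odd factors contribute no powers of $2$, while the even factors $2, 4, 6, \ldots, 2n$ multiply to $2^n \cdot n!$. This gives the recurrence
\begin{equation*}
\beta(2n) = n + \beta(n).
\end{equation*}
Specializing to $n = 2^{k-1}$, I obtain $\beta(2^k) = 2^{k-1} + \beta(2^{k-1})$, and the induction hypothesis $\beta(2^{k-1}) = 2^{k-1} - 1$ yields $\beta(2^k) = 2^{k-1} + 2^{k-1} - 1 = 2^k - 1$, as required.

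The main obstacle is essentially nonexistent; this is a textbook consequence of Legendre's formula. The only real choice is presentational: whether to cite Legendre directly or to give the short induction above. Given that the surrounding lemmas (e.g., Proposition \ref{exp_recursive}) are proved by induction on the halving structure of $n$, matching that style with the recurrence $\beta(2n) = n + \beta(n)$ keeps the exposition uniform and requires no external citation.
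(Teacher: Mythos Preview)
Your proof is correct, but the paper actually takes the route you passed over. It applies Legendre's formula directly: for each $1\le i\le k$ there are $2^k/2^i=2^{k-i}$ multiples of $2^i$ in $\{1,\dots,2^k\}$, so $\beta(2^k)=\sum_{i=1}^k 2^{k-i}=2^k-1$. Your inductive argument via the recurrence $\beta(2n)=n+\beta(n)$ is equally short and arguably fits the halving theme of Proposition~\ref{exp_recursive} more closely, while the paper's direct count avoids setting up an induction at all and yields the geometric sum in one line. Both are standard and essentially equivalent in effort; the irony is that your stylistic justification for avoiding Legendre is the opposite of what the authors chose.
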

\begin{proof}
	Let $2^{\beta(2^k)}$ be the largest power of $2$ that divides $2^k!$. For  $1\le i \le k$ there are $\frac{2^k}{2^i}=2^{k-i}$ numbers in $\{1,2,\dots,2^k\}$ divisible by $2^i$. Each of these adds one to  $\beta(2^k)$. So $\beta(2^k)=\sum\limits_{i=1}^{k}2^{k-i}=\sum\limits_{i=0}^{k-1}2^{i}=2^k-1$.
\end{proof}
\begin{lemma}\label{pow2_m}
	Let $n=2^k+r$, where $0  < r <2^k$. 
	The largest power of $2$ that divides $\frac{n!}{(2^k)!}=n(n-1)\cdots (2^k+1)$ is the same as the largest power of $2$ that divides $(n-2^k)!=r!$.
\end{lemma}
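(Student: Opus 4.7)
The plan is to compute the $2$-adic valuation of the product $\frac{n!}{(2^k)!} = (2^k+1)(2^k+2)\cdots(2^k+r)$ factor-by-factor, and show that each factor contributes exactly as much to the valuation as the corresponding factor of $r!$. So the key claim to establish is: for every $j$ with $1 \le j \le r$, the highest power of $2$ dividing $2^k+j$ equals the highest power of $2$ dividing $j$.

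To prove this claim, I would fix such a $j$ and write $j = 2^a m$ with $m$ odd. Because $1 \le j \le r < 2^k$, we have $2^a \le j < 2^k$, and hence $a < k$, i.e., $a \le k-1$. Then
\begin{equation*}
2^k + j \;=\; 2^k + 2^a m \;=\; 2^a\bigl(2^{k-a} + m\bigr).
\end{equation*}
Since $k - a \ge 1$, the term $2^{k-a}$ is even, while $m$ is odd, so $2^{k-a}+m$ is odd. Therefore the largest power of $2$ dividing $2^k+j$ is exactly $2^a$, which is the same as the largest power of $2$ dividing $j$.

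Once this claim is in hand, the lemma follows by multiplying: the largest power of $2$ dividing $(2^k+1)(2^k+2)\cdots(2^k+r)$ is $\prod_{j=1}^{r} 2^{v_2(j)} = 2^{\sum_{j=1}^{r} v_2(j)}$, which by the same identity applied trivially to $j$ itself equals the largest power of $2$ dividing $r! = 1 \cdot 2 \cdots r$. No real obstacle is expected; the only subtle point is verifying that the strict inequality $j < 2^k$ forces $a < k$ so that the factorization $2^a(2^{k-a}+m)$ really does pull out all the $2$'s. This inequality is exactly the hypothesis $0 < r < 2^k$, so it holds throughout the range of $j$.
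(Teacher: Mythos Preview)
Your proof is correct and follows essentially the same approach as the paper's (very terse) argument: both rest on the observation that for $1 \le j < 2^k$ the $2$-adic valuation of $2^k+j$ equals that of $j$. Your version is in fact more detailed than the paper's, which simply notes that $2^k$ divides none of $2^k+1,\dots,2^k+r$ and asserts the conclusion directly.
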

\begin{proof}
	$2^k$ does  not divide any of $\{2^k+1, 2^k+2, \dots ,2^k+r=n\}$, so largest power of of $2$ that divides $\frac{n!}{(2^k)!}=n(n-1)\cdots (2^k+1)$ is the same as the largest power of $2$ that divides $(n-2^k)!$.
\end{proof}	
\begin{lemma}\label{pow2_2km}
	Let $n=2^k+r$, where $0  < r <2^k$. The largest power of $2$ that divides $n!$ is the largest power of $2$ that divides $2^k!$ plus the largest power of $2$ that divides $(n-2^k)!=r!$
\end{lemma}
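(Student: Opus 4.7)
The plan is to factor $n!$ as $(2^k)! \cdot \dfrac{n!}{(2^k)!}$ and apply the two preceding lemmas to read off the $2$-adic valuation of each factor. Since the $2$-adic valuation is additive over products, the largest power of $2$ dividing $n!$ equals the sum of the largest powers of $2$ dividing each factor.

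First I would write $n! = (2^k)! \cdot n(n-1)\cdots(2^k+1)$, so that any prime-power divisibility statement about $n!$ decomposes into a statement about $(2^k)!$ and a statement about $n(n-1)\cdots(2^k+1) = \dfrac{n!}{(2^k)!}$. Next, I would invoke Lemma \ref{pow2_m} to replace the largest power of $2$ dividing $\dfrac{n!}{(2^k)!}$ with the largest power of $2$ dividing $(n-2^k)! = r!$. At that point the conclusion is immediate: the largest power of $2$ dividing $n!$ is the sum of the largest power of $2$ dividing $(2^k)!$ and the largest power of $2$ dividing $r!$, which is exactly the statement of the lemma.

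There is no real obstacle here: the only fact used beyond the two cited lemmas is that the $2$-adic valuation is a completely additive function on positive integers, i.e., $v_2(ab) = v_2(a) + v_2(b)$. This ensures that the exponents of the largest power of $2$ dividing each factor simply add. The argument is therefore essentially a one-line manipulation that packages the content of Lemma \ref{pow2_m} (and, if one wished to make the value of the first summand explicit, Lemma \ref{pow2_2k}, which gives $\beta(2^k) = 2^k - 1$, though the statement of Lemma \ref{pow2_2km} does not require this explicit evaluation).
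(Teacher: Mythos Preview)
Your argument is correct and is exactly the approach the paper takes: the paper's proof simply reads ``Follows from Lemmas \ref{pow2_2k} and \ref{pow2_m},'' and you have spelled out precisely how, via the factorization $n! = (2^k)!\cdot\frac{n!}{(2^k)!}$ and additivity of the $2$-adic valuation. Your observation that Lemma \ref{pow2_2k} is not strictly needed for the statement as written (only Lemma \ref{pow2_m} and additivity are) is also accurate.
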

\begin{proof}
	Follows from Lemmas \ref{pow2_2k}  and \ref{pow2_m}.
\end{proof}	
\begin{proposition}\label{lowerboundmet}
	Let $n=2^k+r$ , where $0 \le r < 2^k$. The bound in Proposition \ref{lower_bound} on summations on $n$ summands is met by a parenthetic form $\mu(n)$ that has the recursive  form 
		$$\mu(n)=(\mu(2^k)+\mu(r))$$
where $\mu(2^i)$ is the pairwise summation on $2^i$ elements.
\end{proposition}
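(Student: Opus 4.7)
The plan is to reduce the problem via Corollary \ref{equiv_sum} to a statement about $S$-nodes: since the number of non-equivalent summations on a tree is $n!/2^{\varepsilon}$, it suffices to show that the number $\varepsilon(\mu(n))$ of $S$-nodes in the SD-tree of $\mu(n)$ equals $\beta(n)$, the largest exponent such that $2^{\beta(n)}$ divides $n!$. The lower bound of Proposition \ref{lower_bound} is $n!/2^{\beta(n)}$, so this matches exactly.

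I would proceed by strong induction on $n$. For the base $n=1$, the tree $\mu(1)$ is a single leaf with no internal nodes, and $\beta(1)=0$. For $n>1$, write $n=2^k+r$ with $0\le r<2^k$ as in the statement. There are two subcases. If $r=0$, then $\mu(n)=\mu(2^k)$ is the pairwise summation on $2^k$ leaves; because $n$ is a power of two, every internal node has two subtrees of equal size, so all $2^k-1$ internal nodes are $S$-nodes. By Lemma \ref{pow2_2k}, $\beta(2^k)=2^k-1$, matching. If $r>0$, then $\mu(n)=(\mu(2^k)+\mu(r))$; the root joins subtrees of sizes $2^k$ and $r$, and since $r<2^k$ these differ, so the root is a $D$-node. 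Hence
\[
\varepsilon(\mu(n)) \;=\; \varepsilon(\mu(2^k)) + \varepsilon(\mu(r)) \;=\; (2^k-1) + \beta(r),
\]
using the preceding subcase for $\mu(2^k)$ and the induction hypothesis for $\mu(r)$ (noting $r<n$). By Lemma \ref{pow2_2km}, $\beta(n)=\beta(2^k)+\beta(r)=(2^k-1)+\beta(r)$, so $\varepsilon(\mu(n))=\beta(n)$, completing the induction.

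The conceptual content of the argument is that the recursive construction $\mu$ is engineered so every $D$-split occurs precisely at the places where the Legendre-style decomposition of $\beta(n)$ peels off a power of two, while the pairwise-summation blocks on powers of two saturate the $S$-node count inside each block. There is no real obstacle beyond bookkeeping; the substantive work was already done in Lemmas \ref{pow2_2k} and \ref{pow2_2km}, and the mild care needed is to verify that a pairwise summation on $2^k$ leaves really does have all internal nodes labeled $S$ (immediate since sizes halve evenly at every level) and that the root of $\mu(2^k+r)$ is labeled $D$ when $r>0$ (immediate from $r<2^k$).
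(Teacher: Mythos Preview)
Your proof is correct and follows essentially the same approach as the paper: both reduce via Corollary~\ref{equiv_sum} to showing $\varepsilon(\mu(n))=\beta(n)$, handle the $r=0$ case by observing that the pairwise tree on $2^k$ leaves has all $2^k-1$ internal nodes labeled $S$ (matching Lemma~\ref{pow2_2k}), and handle $r>0$ by noting the root is a $D$-node and invoking Lemma~\ref{pow2_2km} on the two subtrees. Your write-up is somewhat more explicit in framing the recursion as a strong induction on $n$, whereas the paper leaves this implicit, but the argument is the same.
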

\begin{proof}[Case 1, $r=0$] 
	A pairwise summation on $2^k$ elements is represented by a perfect binary tree, one in which all interior nodes have two children, and all leaf nodes are at the same level $k$ \cite{zou19}. A perfect binary tree has $2^{k}$ leaf nodes and $2^k - 1$  interior nodes. Each of the interior nodes is an $S$-node. So by Corollary \ref{equiv_sum} and Lemma \ref{pow2_2k}, the bound is  met.
\end{proof}
\begin{proof}[Case 2, $r>0$]
	$2^k$ is not equal to $n-2^k$, so the top node is not an $S$-node  and the problem reduces to finding the largest powers of 2 that divide $2^k$ and $r$. The conclusion follows by Lemma \ref{pow2_2km}.
	\end{proof}
\begin{remark}
    The parenthetic form shown in Proposition \ref{lowerboundmet} is not necessarily the only one that meets the bound. For example, Table \ref{paren_nk} shows $3$ non-equivalent parenthetic forms meeting the bound  $\beta(n)$  from Proposition \ref{upper_bound} for each of $n=7, 11, 13, 14$, and $15$ non-equivalent parenthetic forms meeting the bound for $n=15$.
\end{remark}

\subsection{Enumeration of parenthetic forms}
\begin{proposition}\label{halfcat}
	The number of non-isomorphic summation trees (or parenthetic forms) with $n$ unlabeled leaf nodes (or summands) is 
	\begin{equation}
	\alpha(n)=\sum_{i=1}^{\floor{\frac{n}{2}}}\alpha(i)\alpha(n-i)
	\end{equation}
\end{proposition}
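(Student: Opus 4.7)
The plan is to prove this recurrence by strong induction on $n$, via root decomposition. Set $\alpha(1) = 1$ (the single-leaf parenthetic form) as the base case; the sum on the right is then empty. For $n \ge 2$, every parenthetic form on $n$ summands, viewed as a rooted full binary tree, has a root with two child subtrees. If the left child holds $i$ leaves, then the right holds $n-i$, and since the pairwise commutativity isomorphism allows the two children of any internal node to be swapped, the convention $i \le n - i$, i.e., $1 \le i \le \floor{n/2}$, loses no generality.

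For each $i$ with $i < n - i$, the inductive hypothesis supplies $\alpha(i)$ non-isomorphic left forms and $\alpha(n-i)$ non-isomorphic right forms, and any pairing yields a distinct non-isomorphic parenthetic form on $n$ summands, contributing $\alpha(i)\alpha(n-i)$ to the total. Summing across valid $i$ gives the right-hand side of the recurrence. The justification that distinct pairs of (smaller form, larger form) produce non-isomorphic parenthetic forms is immediate from Lemma \ref{iso_SD}-style reasoning: an isomorphism of $n$-leaf trees that swaps root children would force the leaf counts of those children to agree, which is excluded when $i < n - i$.

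The main obstacle is the boundary case $i = n/2$ that arises when $n$ is even. Here the two child subtrees have equal leaf count, so the root-swap isomorphism acts on pairs $(L,R)$ of forms of size $n/2$, and the naive product $\alpha(n/2)^2$ appearing in the stated sum counts ordered pairs rather than orbits under this swap. I would handle this by fixing a total order on the $\alpha(n/2)$ parenthetic forms of size $n/2$ and requiring the left subtree's rank to be at most the right's, a standard orientation convention, and then carefully reconciling the resulting count with the $\alpha(n/2)\alpha(n/2)$ term appearing in the formula — this is the part of the argument that deserves the closest attention, and everything else in the induction follows routinely from the root decomposition and the inductive hypothesis.
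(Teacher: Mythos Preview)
Your root-decomposition approach is exactly what the paper does, but you have gone further than the paper by flagging the boundary case $i=n/2$ as needing care. You are right to worry, and the ``careful reconciliation'' you propose cannot in fact be carried out: under the paper's own definition of isomorphism (two unlabeled summation trees are isomorphic if one is obtained from the other by reversing children at some set of nodes), the $i=n/2$ term should count \emph{unordered} pairs of $(n/2)$-leaf forms, contributing $\binom{\alpha(n/2)+1}{2}$ rather than $\alpha(n/2)^2$. The recurrence as stated therefore overcounts. The count of non-isomorphic full binary trees with $n$ unlabeled leaves under child-swapping is the Wedderburn--Etherington sequence (OEIS A001190), not the half-Catalan sequence A000992; the two agree through $n=7$ but diverge from $n=8$ onward ($23$ versus $24$). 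For a concrete witness at $n=8$, split at the root into two $4$-leaf subtrees: there are two $4$-leaf shapes (ladder and balanced), hence three unordered pairs, yet $\alpha(4)^2=4$ counts the mixed pair twice.

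The paper's own proof simply asserts that ``it suffices to consider only the case where $i\le n-i$'' and sums, glossing over precisely the difficulty you isolated; so the gap lies in the statement, not in your reasoning. One can make the stated recurrence literally correct by reinterpreting ``non-isomorphic summation trees'' as plane trees placed in a canonical left-$\le$-right orientation, with the two orderings at an $S$-node still distinguished; that is what the recurrence actually enumerates, and it is internally consistent with the paper's later formula for $\tau(n,s)$ and with the identity $\sum_s \tau(n,s)\,n!/2^s=(2n-3)!!$. But that reading is not the isomorphism relation the paper defines, and your instinct that the $i=n/2$ term ``deserves the closest attention'' was exactly right.
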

\begin{proof}
    Partition $n$ as $i$ and $n-i$. A summation tree is formed by attaching an $i$-leaf tree as the left subtree and an $(n-i)$-leaf tree as the right. Because of commutativity, it suffices to consider only the case where $i\le n-i$. Summing across such $i$ gives the proposition.
\end{proof}
\begin{remark}
	This sequence  is  OEIS A000992, the $n^\text{th}$ half-Catalan number. An equivalent observation is made by Callan \cite{OEISA000992}. In Table \ref{paren_nk}, this sequence is  in the rightmost column.
\end{remark}
\begin{proposition}\label{propkSnodes}
	The number of non-isomorphic summation trees  (or parenthetic forms) with $n$ unlabeled leaf nodes (or summands) and $s$ $S$-nodes is  
	\begin{equation}
		\tau(n,s) = 
		\begin{cases}
			\sum\limits_{j=1}^{\floor{\frac{n-1}{2}}}\sum\limits_{i=0}^s \tau(j,i)\tau(n-j,s-i), & \text{if $n$ is odd}; \\
			\sum\limits_{j=1}^{\floor{\frac{n-1}{2}}}\sum\limits_{i=0}^s \tau(j,i)\tau(n-j,s-i) + \sum\limits_{i=0}^{s-1} \tau(\frac{n}{2},i)\tau(\frac{n}{2},s-1-i), & \text{if $n$ is even}. \\
		\end{cases}
	\end{equation}
	\begin{equation*}
		\text{where } \tau(n,s)=
		\begin{cases}
			1, & \text{if } n=0 \text{ and } s=0 \text{ (the unique empty tree)};  \\
			1, & \text{if } n=1 \text{ and } s=0 \text{ (the unique single-node tree)}.\\
		\end{cases}
	\end{equation*}
\end{proposition}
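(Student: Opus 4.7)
The plan is to induct on $n$, paralleling the analogous recurrence in Proposition \ref{halfcat}. The base cases $n=0$ and $n=1$ are the stated unique trees with no internal nodes, so $\tau(0,0)=\tau(1,0)=1$ and all other values at these sizes vanish. For $n\ge 2$, every iso class of summation tree has a root with exactly two subtrees of sizes $j$ and $n-j$, and because children may be swapped freely the class is determined by the \emph{unordered} pair of subtree iso classes. I would canonicalize by taking $1\le j\le \floor{n/2}$ and then split according to whether $j<n-j$ or $j=n-j$.

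In Case I ($j<n-j$, equivalently $1\le j\le \floor{(n-1)/2}$), the root is a $D$-node and contributes $0$ to the $S$-count. Because the two subtrees have different leaf counts they lie in disjoint collections, so an unordered pair is already determined by an ordered pair. Choosing $i$ $S$-nodes on the $j$-leaf side and $s-i$ on the $(n-j)$-leaf side contributes $\tau(j,i)\tau(n-j,s-i)$, and summing over the admissible $j$ and $i$ yields the first displayed sum. For odd $n$ this exhausts all possibilities since $j=n-j$ is impossible, producing the odd-$n$ branch of the formula.

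In Case II ($j=n/2$, only possible for $n$ even), the root is an $S$-node and contributes $1$ to the $S$-count, leaving $s-1$ $S$-nodes to be split between the two subtrees. Both subtrees are drawn from the same collection of $n/2$-leaf iso classes and form an unordered pair whose $S$-counts sum to $s-1$. Adding the second displayed sum to the Case I sum would then produce the even-$n$ branch of the proposition, and since every iso class of $n$-leaf tree falls into exactly one of these two cases, this closes the induction.

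The main obstacle is Case II. The expression $\sum_{i=0}^{s-1}\tau(n/2,i)\tau(n/2,s-1-i)$ naturally enumerates \emph{ordered} pairs of $n/2$-leaf iso classes, whereas an iso class of the full tree corresponds to an \emph{unordered} pair. The usual fix is to halve off-diagonal contributions and replace the diagonal by $\binom{\tau(n/2,(s-1)/2)+1}{2}$ when $s-1$ is even. Before committing to the formula as written I would verify it on small cases such as $n=4,6,8$; if it overcounts, I would substitute the symmetry-corrected expression $\sum_{i<(s-1)/2}\tau(n/2,i)\tau(n/2,s-1-i) + \binom{\tau(n/2,(s-1)/2)+1}{2}[s-1\text{ even}]$ and finish the induction without further change.
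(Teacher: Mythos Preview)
Your decomposition into Cases I and II is exactly the paper's argument, and your suspicion about Case II is well-founded. The paper's proof simply says ``the left sub-tree has $i$ $S$-nodes and the right has $s-1-i$'' and sums over $i$, never addressing the fact that swapping the two $n/2$-leaf subtrees yields an isomorphic tree under the paper's own definition of isomorphism. So the paper does not resolve the obstacle you flagged; it silently counts ordered pairs.

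In fact the stated formula does overcount. For $n=8$, $s=5$, the only contribution from an $S$-root comes from the unordered pair $\{L,B\}$, where $L$ is the $4$-leaf ladder (one $S$-node) and $B$ the balanced $4$-leaf tree (three $S$-nodes); together with the single tree from the $2+6$ split this gives $\tau(8,5)=2$, whereas the formula (and Table~\ref{paren_nk}) give $3$. The same defect already infects Proposition~\ref{halfcat}: the number of non-isomorphic unlabeled summation trees on $n$ leaves is the Wedderburn--Etherington sequence A001190 (with eighth term $23$), not the half-Catalan sequence A000992 (with eighth term $24$). Your proposed symmetry correction---halving the off-diagonal terms and replacing the diagonal by $\binom{\tau(n/2,(s-1)/2)+1}{2}$ when $s-1$ is even---is exactly what is needed, and with it the induction goes through as you outlined.
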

\begin{proof}
	At every stage, the first term $n$ in the calculation is reduced by half, so this recursion terminates, according to the rules for termination of $\tau(n,s)$ above.
	
	If $n$ is odd, the children of the top node cannot both have the same number of leaf descendants, so the top node cannot be an $S$-node. The formula follows by counting the sub-trees on the 2-partitions of $n$, where the left sub-tree has $i$ $S$-nodes  and the right has $s-i$ $S$-nodes, for  $0 \le i \le  s$, and the two counts are multiplied together.
	
	If $n$ is even,  the formula is as above, but we must also include the case where the top node is an S-node. In that case, both children have $\frac{n}{2}$ leaf nodes, and the summation is on sub-trees where there are  $\frac{n}{2}$ leaf nodes in each sub-tree. In this case, the root is one of the $S$-nodes, so the left sub-tree has $i$ $S$-nodes  and the right has $s-1-i$ $S$-nodes, for  $0 \le i \le  s-1$. 
\end{proof}
\begin{remark}
	The number of parenthetic forms increases much more  slowly than the total number of summations. Table \ref{paren_nk} shows the enumeration of parenthetic forms on $n$ summands and $s$ $S$-nodes for $n$ up to 16 summands. 
\end{remark}
\begin{remark}
	The formula in Proposition  \ref{propkSnodes} is recursive and the proof is constructive, so one can more or less laboriously calculate all the parenthetic forms on $N$ summands.
\end{remark}

	\begin{table}[h!tb]
	\centering
	\caption{The number $\tau(n,s)$ of summation trees with $n$ leaf nodes and $s$ $S$-nodes, calculated up to $n=15$. The  rows  represent $n$, the number of leaf nodes, and the columns represent $s$, the number of $S$-nodes.  $\alpha(n)$ is the sum of all the $\tau(n,s)$ in the $n^\text{th}$ row.}
		\begin{tabular}{r|rrrrrrrrrrrrrr|r}
			\backslashbox{\textbf{n}}{\textbf{s}} & \textbf{1} & \textbf{2}  & \textbf{3}   & \textbf{4}    & \textbf{5}    & \textbf{6}    & \textbf{7}    & \textbf{8}    & \textbf{9}   & \textbf{10}  & \textbf{11} & \textbf{12} & \textbf{13} & \textbf{14}   & $\boldsymbol{\alpha(n)}$\\
			\hline
			\textbf{1}   &    &    &     &      &      &      &      &      &     &     &    &    &    &    &    \textbf{0}  \\
			\textbf{2}   & 1 &   &     &      &      &      &      &      &     &     &    &    &    &    &    \textbf{1}   \\
			\textbf{3}   & 1 & 0  &    &      &      &      &      &      &     &     &    &    &    &    &    \textbf{1}   \\
			\textbf{4}   & 1 & 0  & 1   &     &      &      &      &      &     &     &    &    &    &    &    \textbf{2}   \\
			\textbf{5}   & 1 & 1  & 1   & 0    &      &      &      &      &     &     &    &    &    &    &    \textbf{3}  \\
			\textbf{6}   & 1 & 2  & 2   & 1    & 0    &     &      &      &     &     &    &    &    &    &     \textbf{6}  \\
			\textbf{7}   & 1 & 4  & 3   & 3    & 0    & 0    &       &      &     &     &    &    &    &    &  \textbf{11}    \\
			\textbf{8}   & 1 & 6  & 7   & 6    & 3    & 0    & 1    &      &     &     &    &    &    &    &     \textbf{24}  \\
			\textbf{9}   & 1 & 9  & 14  & 13   & 8    & 1    & 1    & 0    &     &     &    &    &    &    &     \textbf{47}  \\
			\textbf{10}  & 1 & 12 & 27  & 28   & 23   & 8    & 3    & 1    & 0   &      &    &    &    &    &    \textbf{103}   \\
			\textbf{11}  & 1 & 16 & 49  & 58   & 54   & 25   & 8    & 3    & 0   & 0   &     &    &    &    &   \textbf{214}    \\
			\textbf{12}  & 1 & 20 & 82  & 119  & 125  & 82   & 34   & 15   & 2   & 1   & 0  &    &    &    &   \textbf{481}   \\
			\textbf{13}  & 1 & 25 & 132 & 237  & 270  & 213  & 99   & 42   & 8   & 3   & 0  & 0  &    &    &  \textbf{1030}   \\
			\textbf{14}  & 1 & 30 & 199 & 449  & 578  & 542  & 322  & 151  & 51  & 11  & 3  & 0  & 0  &    &    \textbf{2337}   \\
			\textbf{15}  & 1 & 36 & 294 & 821  & 1190 & 1255 & 867  & 440  & 173 & 39  & 15 & 0  & 0  & 0  &   \textbf{5131}  \\ 
		\end{tabular}
		\label{paren_nk}
	\end{table}

\begin{proposition}\label{prop2Snodes}
	The number of parenthetic forms on $n \ge 1$ with exactly $2$ $S$-nodes is
	\begin{equation}
	\tau(n,2) =
	\begin{cases}
	(m-1)^2, &   \text{if } n=2m+1; \\
	(m-1)(m-2), &   \text{if } n=2m. \\
	\end{cases}
	\end{equation}
\end{proposition}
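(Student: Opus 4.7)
The plan is a structural induction on $n$, driven by a case split on the root of the SD-tree.

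First, I would show that for $n \ge 3$ any parenthetic form with exactly two $S$-nodes must have a $D$-node at its root. Recall, from the proof that ladder SD-trees are exactly those with a single $S$-node, that every subtree with at least two leaves contains at least one $S$-node. If the root were an $S$-node then $n=2m$ with each child-subtree carrying $m$ leaves; for $m \ge 2$ the two subtrees together contribute at least two further $S$-nodes, giving a total of at least three, while $m=1$ produces only one $S$-node. Neither case yields exactly two.

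Second, with the root forced to be a $D$-node, let its two subtrees have $a$ and $b$ leaves with $a < b$ and $a+b = n$ (the strict inequality is automatic when $n$ is even, since $a = b$ would force the root to be an $S$-node). I would then distinguish two sub-cases. If $a \ge 2$, each subtree has at least one $S$-node and their $S$-node counts must sum to two, so each subtree is a ladder on its leaves; by Corollary \ref{laddercor} the ladder form is unique up to isomorphism, so this sub-case contributes exactly one parenthetic form per valid pair $(a,b)$. Counting pairs with $2 \le a < b$ and $a+b = n$ gives $m-1$ when $n = 2m+1$ and $m-2$ when $n = 2m$. If instead $a = 1$, the lone leaf carries no $S$-node, so the other subtree on $n-1$ leaves must itself have exactly two $S$-nodes; this sub-case contributes $\tau(n-1, 2)$.

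Combining the sub-cases yields the recurrences
\[
\tau(2m+1, 2) = (m-1) + \tau(2m, 2), \qquad \tau(2m, 2) = (m-2) + \tau(2m-1, 2),
\]
with base values $\tau(2,2) = \tau(3,2) = 0$ by direct inspection. A short induction on $n$ then produces $\tau(2m+1, 2) = (m-1) + (m-1)(m-2) = (m-1)^2$ and $\tau(2m, 2) = (m-2) + (m-2)^2 = (m-1)(m-2)$, matching the statement. The only real subtlety is the root-cannot-be-$S$ argument; once that is in hand, the rest reduces to counting pairs of positive integers and a routine induction.
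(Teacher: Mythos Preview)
Your proof is correct and follows essentially the same route as the paper: both argue that the root must be a $D$-node, split into the sub-cases ``one subtree is a single leaf'' versus ``both subtrees are ladders,'' derive the recurrence $\tau(n,2)=\tau(n-1,2)+\lfloor n/2\rfloor-1$ (adjusted to $m-2$ in the even case), and finish with the same short induction. If anything, your justification for why the root cannot be an $S$-node is more explicit than the paper's, which simply asserts the observation.
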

\begin{proof}
	We proceed by induction, and observe that the proposition is  true by inspection for $n=1,2,3,4$. We also observe that  no tree having exactly 2 $S$-nodes can have an $S$-node root. This means that either 
	(a) one of the subtrees below the root has two $S$-nodes and the other has none, or else 
	(b) both subtrees have exactly one $S$-node.  
	\newline
	 In case (a), the subtree with no $S$-nodes consists of only one leaf node. The other subtree has $n-1$ leaf nodes and $2$ $S$-nodes. So the number of trees  with such subtrees is $\tau(n-1,2)$. 
	 \newline
	 In case (b), each subtree has one $S$-node, so by Corollary \ref{laddercor}, it is the unique ladder tree of its size. Each subtree with $\ell$ leaf nodes has a sibling with $n-\ell$ leaf nodes, and there is one ladder tree at each of these sizes, so the number of such double ladder trees is $\floor{\frac{n}{2}}-1$ if $n$ is odd, and ${\frac{n}{2}}-2$ if $n$ is even (excluding when the siblings have an equal number of leaf nodes).
	 \newline
	 Putting (a) and (b) together, $\tau(n,2)=\tau(n-1,2)+\floor{\frac{n}{2}}-1$. Using the induction hypothesis,
	 \begin{equation*}
    	 \tau(n,2) =
    	 \begin{cases}
        	 (m-1)(m-2)+(m-1) = (m-1)^2, &  \text{if } n=2m+1; \\
        	 (m-2)^2+(m-2)= (m-1)(m-2), &   \text{if } n=2m. \\
    	 \end{cases}
	 \end{equation*}
\end{proof}
\begin{remark}
	$\tau(n,1)$  counts the (unique) ladder summations, as discussed in Proposition \ref{numladder}. $\tau(n,2)$ is sequence OEIS A002620 \cite{OEISA002620}, the quarter-squares, with an offset. The bold-face column $\boldsymbol{\alpha(n)}$ in Table \ref{paren_nk} is the half-Catalan sequence OEIS A000992 \cite{OEISA000992} seen in Proposition \ref{halfcat}. None of the other sequences in Table \ref{paren_nk} seem to be listed in the OEIS.
\end{remark}

\section{Conclusion}
We have classified a number of summation types arising in real calculations. Table \ref{summary_table}  in the Appendix presents a summary of the main combinatorial results derived or referenced in this paper, with other interpretations and OEIS numbers where applicable.

\section{Acknowledgments}

We thank Terry Grov\'e for asking the questions that inspired this work, and for many additional conversations. We also thank Andy DuBois, Shane Fogerty, Brett Neuman, Chris Mauney, and Bob Robey for many interesting and helpful discussions on the subject.

This work was performed at Los Alamos National Laboratory, managed by Triad National Security, LLC, under U.S. Government contract 89233218CNA000001, and at LANL's Ultrascale Systems Research Center (USRC) at the New Mexico Consortium, supported by the U.S. Department of Energy contract DE-FC02-06ER25750.

This publication is unclassified, and has been assigned LANL identifier LA-UR-20-23426.

In memory of John Conway and Vera Pless.

\bibliographystyle{ieeetr}

\begin{thebibliography}{10}

\bibitem{goldberg91}
D.~Goldberg, ``What every computer scientist should know about floating-point
  arithmetic,'' {\em ACM Comput. Surv.}, vol.~23, p.~5–48, Mar. 1991.

\bibitem{higham93}
N.~J. Higham, ``The accuracy of floating point summation,'' {\em SIAM J. Sci.
  Comput}, vol.~14, pp.~783--799, 1993.

\bibitem{rosenberg19}
N.~A. Rosenberg, ``Enumeration of lonely pairs of gene trees and species trees
  by means of antipodal cherries,'' {\em Advances in Applied Mathematics},
  vol.~102, pp.~1 -- 17, 2019.

\bibitem{karatsuba62}
A.~Karatsuba and Y.~Ofman, ``Multiplication of multidigit numbers on
  automata,'' {\em Soviet Physics Doklady}, vol.~7, p.~595, 12 1962.

\bibitem{baruchel19}
T.~Baruchel, ``Flattening {K}aratsuba’s recursion tree into a single
  summation,'' {\em SN Computer Science}, vol.~1, Nov 2019.

\bibitem{baruchel19_2}
T.~Baruchel, ``Properties of the cumulated deficient binary digit sum.''
  \url{https://arxiv.org/abs/1908.02250}, 2019.

\bibitem{takagi01}
T.~Takagi, ``A simple example of the continuous function without derivative,''
  {\em Tokyo Sugaku-Butsurigakkwai Hokoku}, vol.~1, pp.~176--177, 1901.

\bibitem{lagarias11}
J.~Lagarias, ``The {T}akagi function and its properties,'' {\em Functions in
  Number Theory and Their Probabilistic Aspects}, 12 2011.

\bibitem{IEEE754}
\relax IEEE, ``{IEEE} {S}tandard for {F}loating-{P}oint {A}rithmetic,'' {\em
  {IEEE} {S}td. 754-2019 ({R}evision of IEEE 754-2008)}, pp.~1--84, 2019.

\bibitem{job20}
V.~Job, T.~Grov\'e, S.~Fogerty, C.~Mauney, L.~Monroe, and R.~Robey, ``Order
  matters: A case study on reducing floating point error in sums through
  ordering and grouping.'' Manuscript submitted to the International Conference
  for High-Performance Computing, Networking, Storage and Analysis, 2020.

\bibitem{OEIS}
\relax OEIS Foundation~Inc., ``The {O}n-{L}ine {E}ncyclopedia of {I}nteger
  {S}equences.'' \url{http://oeis.org}, 2020.

\bibitem{knuth97art1}
D.~Knuth, {\em The Art of Computer Programming: Volume 1: Fundamental
  Algorithms}.
\newblock Pearson Education, 1997.

\bibitem{booth76}
K.~S. Booth and G.~S. Lueker, ``Testing for the consecutive ones property,
  interval graphs, and graph planarity using {PQ}-tree algorithms,'' {\em
  Journal of Computer and System Sciences}, vol.~13, no.~3, pp.~335 -- 379,
  1976.

\bibitem{stanley97}
R.~Stanley and S.~Fomin, {\em Enumerative Combinatorics: Volume 2}.
\newblock Cambridge Studies in Advanced Mathematics, Cambridge University
  Press, 1997.

\bibitem{callan09}
D.~Callan, ``A combinatorial survey of identities for the double factorial.''
  \url{https://arxiv.org/abs/0906.1317}, 2009.

\bibitem{dale93}
M.~Dale and J.~Moon, ``The permuted analogues of three {C}atalan sets,'' {\em
  Journal of Statistical Planning and Inference}, vol.~34, no.~1, pp.~75 -- 87,
  1993.

\bibitem{OEISA001147}
\relax OEIS Foundation~Inc., ``The {O}n-{L}ine {E}ncyclopedia of {I}nteger
  {S}equences: Sequence {A}001147.'' \url{https://oeis.org/A001147}, 2020.

\bibitem{OEISA000108}
\relax OEIS Foundation~Inc., ``The {O}n-{L}ine {E}ncyclopedia of {I}nteger
  {S}equences: Sequence {A}000108.'' \url{https://oeis.org/A000108}, 2020.

\bibitem{C18}
{ISO}{/}{IEC}, ``Iso {I}nternational {S}tandard 9899:2018– {P}rogramming
  {L}anguage {C}.'' \url{https://www.iso.org/standard/74528.html}, 2018.

\bibitem{OEISA001710}
\relax OEIS Foundation~Inc., ``The {O}n-{L}ine {E}ncyclopedia of {I}nteger
  {S}equences: Sequence {A}001710.'' \url{https://oeis.org/A001710}, 2020.

\bibitem{numpysum}
SciPy.org.
  \url{https://docs.scipy.org/doc/numpy/reference/generated/numpy.sum.html},
  2020.

\bibitem{julia}
JuliaLang.org. \url{https://github.com/JuliaLang/julia/pull/4039}, 2013.

\bibitem{david88}
H.~David, {\em The method of paired comparisons}.
\newblock Griffin's statistical monographs and courses, C. Griffin, 1988.

\bibitem{OEISA096351}
\relax OEIS Foundation~Inc., ``The {O}n-{L}ine {E}ncyclopedia of {I}nteger
  {S}equences: Sequence {A}096351.'' \url{http://oeis.org/A096351}, 2020.

\bibitem{OEISA268289}
\relax OEIS Foundation~Inc., ``The {O}n-{L}ine {E}ncyclopedia of {I}nteger
  {S}equences: Sequence {A}268289.'' \url{http://oeis.org/A268289}, 2020.

\bibitem{hwang17}
H.-K. Hwang, S.~Janson, and T.-H. Tsai, ``Exact and asymptotic solutions of a
  divide-and-conquer recurrence dividing at half: Theory and applications,''
  {\em ACM Trans. Algorithms}, vol.~13, Oct. 2017.

\bibitem{OEISA011371}
\relax OEIS Foundation~Inc., ``The {O}n-{L}ine {E}ncyclopedia of {I}nteger
  {S}equences: Sequence {A}011371.'' \url{https://oeis.org/A011371}, 2020.

\bibitem{OEISA049606}
\relax OEIS Foundation~Inc., ``The {O}n-{L}ine {E}ncyclopedia of {I}nteger
  {S}equences: Sequence {A}049606.'' \url{http://oeis.org/A049606}, 2020.

\bibitem{zou19}
Y.~Zou and P.~E. Black, ``{D}ictionary of {A}lgorithms and {D}ata
  {S}tructures.''
  \url{https://xlinux.nist.gov/dads/HTML/perfectBinaryTree.html}, 2019.

\bibitem{OEISA000992}
\relax OEIS Foundation~Inc., ``The {O}n-{L}ine {E}ncyclopedia of {I}nteger
  {S}equences: Sequence {A}000992.'' \url{https://oeis.org/A000992}, 2020.

\bibitem{OEISA002620}
\relax OEIS Foundation~Inc., ``The {O}n-{L}ine {E}ncyclopedia of {I}nteger
  {S}equences: Sequence {A}002620.'' \url{https://oeis.org/A002620}, 2020.

\end{thebibliography}

\bigskip
\hrule
\bigskip

\noindent 2010 {\it Mathematics Subject Classification}:
Primary 05A10, Secondary 05C30, 65G50.

\noindent \emph{Keywords: }
summations, numerical error, binary trees

\bigskip
\hrule
\bigskip

\noindent (Concerned with sequences
\seqnum{A000992},
\seqnum{A001147},
\seqnum{A001710},
\seqnum{A002620},
\seqnum{A011371},
\seqnum{A060818},\\
\seqnum{A049606},
\seqnum{A096351} and
\seqnum{A268289}.)

\bigskip
\hrule
\bigskip
\newpage

\section{Appendix: Summary of results}

\begin{table}[h!tb]
	\small 
	\centering
	\caption{This table is a summary of the main combinatorial results derived or referenced in this paper, with references, alternative interpretations and OEIS numbers where applicable.}
	 \label{summary_table}

	 \newcolumntype{C}{c}
	 \newcolumntype{T}{>{\raggedright\hspace{0pt}}p{0.45\linewidth}|>{\raggedright\hspace{0pt}}p{0.12\linewidth}|>{\raggedright\hspace{0pt}}p{0.21\linewidth}|p{0.09\linewidth}}
	 
    \begin{tabular} {T} 
	\textbf{Summation interpretation} & \textbf{Refs.} & \textbf{Other meaning} & \textbf{OEIS} \\
	\hline
	\end{tabular}

    \begin{tabular} {C}
    	\\[-.75em]
    	\textit{\textbf{Inequivalent parenthetic forms  (i.e., non-isomorphic summation trees)}} \\
    	\\[-.75em]
	\end{tabular}

    \begin{tabular} {T} 
		\hline
    	No. of non-isomorphic summation trees with $n$ unlabeled leaf nodes & Prop. \ref{halfcat} & Half-Catalan numbers & \seqnum{A000992}\\
    	\hline
    	No. of non-isomorphic summation trees with $n$ unlabeled leaf nodes and 1 $S$-node (ladder or serial summations) & Cor. \ref{laddercor} & The constant 1 &  \\
		\hline
    	No. of non-isomorphic summation trees with $n$ unlabeled leaf nodes and 2 $S$-nodes& Prop. \ref{prop2Snodes} & Quarter-squares & \seqnum{A002620} \\
		\hline
    	No. of non-isomorphic summation trees with $n$ unlabeled leaf nodes and s $S$-nodes & Prop. \ref{propkSnodes} &  &  \\
		\hline
    \end{tabular}
	
    \begin{tabular} {C}
    	\\[-.75em]
    	\textit{\textbf{Computationally inequivalent summations}} \\
    	\\[-.75em]
	\end{tabular}

	\begin{tabular} {T} 
		\hline
		No. of computationally inequivalent summations on $n$ elements having a particular summation SD-tree & Cor. \ref{equiv_sum} & & \\
		\hline
		No. of computationally inequivalent summations on $n$ summands & Prop. \ref{ineq_sum} & Double factorial of odd numbers & \seqnum{A001147} \\
		\hline
		No. of computationally inequivalent ladder, or serial, summations on $n$ summands& Prop. \ref{numladder} & No. of even permutations on $n$ elements & \seqnum{A001710} \\
		\hline
		No. of computationally inequivalent pairwise summations on $n$ summands & Prop. \ref{pairrecursiveold}, Cor. \ref{pairrecursivenew}, \ref{pairclosednew} & No. of tournaments on $n$ teams &  \seqnum{A096351} \\
		\hline
	\end{tabular}

    \begin{tabular} {C}
	    \\[-.75em]
    	\textit{\textbf{$S$-nodes and bounds}} \\
	    \\[-.75em]
\end{tabular}

\begin{tabular} {T} 
	\hline
	Lower bound for the number of $S$-nodes in a summation tree on $n$ summands & Prop. \ref{upper_bound} &The constant 1 & \\
	\hline
	Upper bound for the number of $S$-nodes in a summation tree on $n$ summands & Prop. \ref{lower_bound} & Largest number $k$ where $2^k$ divides $n!$ & \seqnum{A011371} \\
	\hline
    No. of $S$-nodes in a pairwise summation tree with $n$ leaf nodes & Prop.\newline \ref{exp_recursive}, \ref{exp_closed} & Cumulative deficient binary digit sum & \seqnum{A268289} \\
	\hline
	Lower bound for the number of computationally inequivalent summations on $n$ summands on a given SD-tree & Prop. \ref{lower_bound} & Largest odd divisor of $n!$ & \seqnum{A049606} \\
	\hline
	Upper bound for the number of computationally inequivalent summations on $n$ summands on a given SD-tree & Prop. \ref{upper_bound} & No. of even permutations on $n$ elements &  \seqnum{A001710} \\
	\hline
\end{tabular}

\end{table}
\end{document}